\definecolor{darkred}  {rgb}{0.5,0,0}
\definecolor{darkblue} {rgb}{0,0,0.5}
\definecolor{darkgreen}{rgb}{0,0.5,0}
\theoremstyle{definition}
\newtheorem{lemma}{Lemma}
\newtheorem{proposition}{Proposition}
\newtheorem{thm}{Theorem}
\definecolor{cool_green}{rgb}{0.0, 0.5, 0.0}
\begin{document}

\title{Secure quantum ranging}
\author{Yunkai Wang}
\email{ywang10@perimeterinstitute.ca}
\affiliation{Perimeter Institute for Theoretical Physics, Waterloo, Ontario N2L 2Y5, Canada.}
\affiliation{Institute for Quantum Computing, University of Waterloo, Ontario N2L 3G1, Canada.}
\affiliation{Department of Applied Mathematics, University of Waterloo, Ontario N2L 3G1, Canada.}

\author{Graeme Smith}
\email{graeme.smith@uwaterloo.ca}
\affiliation{Institute for Quantum Computing, University of Waterloo, Ontario N2L 3G1, Canada.}
\affiliation{Department of Applied Mathematics, University of Waterloo, Ontario N2L 3G1, Canada.}

\author{Alex May}
\email{amay@perimeterinstitute.ca}
\affiliation{Perimeter Institute for Theoretical Physics, Waterloo, Ontario N2L 2Y5, Canada.}
\affiliation{Institute for Quantum Computing, University of Waterloo, Ontario N2L 3G1, Canada.}

\begin{abstract}
Determining and verifying an object's position is a fundamental task with broad practical relevance. We propose a secure quantum ranging protocol that combines quantum ranging with quantum position verification (QPV). Our method achieves Heisenberg-limited precision in position estimation while simultaneously detecting potential cheaters. Two verifiers each send out a state that is entangled in frequency space within a single optical mode.
An honest prover only needs to perform  simple beam-splitter operations, whereas cheaters are allowed to use arbitrary linear optical operations, one ancillary mode, and perfect quantum memories—though without access to entanglement. Our approach considers a previously unstudied security aspect to quantum ranging. It also provides a framework to quantify the precision with which a prover's position can be verified in QPV, which previously has been assumed to be infinite.  
\end{abstract}

\maketitle

\textit{Introduction} - Ranging determines the distance to a target using probe states that acquire distance-dependent information through interactions or operations at the target. In recent years, there has been growing interest in exploring how quantum technologies can enhance this task. In particular, much attention has been given to how entanglement can improve the precision of position estimation \cite{giovannetti2001quantum,maccone2020quantum}. When entangled probe states are sent toward an object with an unknown position and the reflected states are measured, the ranging precision can achieve Heisenberg-limited scaling with respect to the number of photons used. Beyond improving precision, recent studies have investigated whether entangled states can enable additional functionalities in ranging. For example, it has been proposed that one can simultaneously detect the presence of an object and estimate its distance \cite{zhuang2022ultimate,zhuang2021quantum}, bridging the tasks of quantum ranging and quantum illumination \cite{tan2008quantum,lloyd2008enhanced,barzanjeh2015microwave,nair2020fundamental,sanz2017quantum,barzanjeh2020microwave,gregory2020imaging,karsa2024quantum,shapiro2020quantum}. We further extend this line of research by integrating a new feature into quantum ranging: the verification of an object's position to ensure security against potential spoofing or cheating attempts.

The security of verifying a prover's position has been extensively studied under the framework of QPV \cite{beausoleil2006tagging, malaney2010quantum, malaney2010location, kent2011quantum, buhrman2014position, vaidman2003instantaneous, beigi2011simplified, bluhm2022single, asadi2025linear,gonzales2019bounds, speelman2016instantaneous, chakraborty2015practical}. In a typical QPV scenario, two verifiers— referred to as Alice and Bob—seek to confirm the location of a prover, Charlie. They send quantum states at the speed of light, along with instructions that require Charlie to perform certain operations and return the results immediately. Because these operations must be performed instantaneously, any cheaters located elsewhere would need entanglement resources to simulate Charlie’s responses. This concept was first introduced in a patent \cite{beausoleil2006tagging} and later developed in the academic literature \cite{malaney2010quantum,malaney2010location,kent2011quantum,buhrman2014position}. Subsequent work has shown that, with sufficiently complex states and operations, the entanglement required to successfully cheat becomes substantial \cite{vaidman2003instantaneous,beigi2011simplified,bluhm2022single,asadi2025linear,gonzales2019bounds,speelman2016instantaneous,chakraborty2015practical,tomamichel2013monogamy}. When cheaters are limited in entanglement resources—as is typically the case in practice—secure position verification becomes feasible. However, research in QPV has largely focused on the quantum computing aspects, particularly on designing attacks using entanglement, while implicitly assuming that the prover’s position can be verified with infinite precision.

In this work, we bridge quantum ranging and QPV, introducing a protocol that not only estimates the position of the prover Charlie with precision achieving Heisenberg scaling over photon number, but also ensures security against cheaters attempting to spoof the position of Charlie—effectively realizing a secure form of quantum ranging. At the same time, our approach introduces a metrological framework within QPV, enabling explicit analysis of the achievable precision in position verification—an aspect that has received little attention despite its practical relevance. Furthermore, unlike some conventional QPV schemes that rely on multi-qubit states to enhance security,  we demonstrate that entanglement within a single optical mode per side can already strengthen security. Our results thus offer a new perspective on both the limitations and potential improvements of QPV, particularly in optical implementations.

\textit{Set up and ranging precision} - We consider a scenario in which two verifiers, Alice and Bob, aim to estimate the position of a prover, Charlie. As illustrated in Fig.~\ref{set_up}(a), Alice and Bob prepare a bipartite state randomly drawn from an ensemble $\{p_i, \ket{\psi_i}\}$ and send it to Charlie. The potential cheaters are assumed to know the ensemble used by Alice and Bob, but not the specific state chosen in each instance.  Upon receiving the state, Charlie applies an instantaneous operation $U$, which is predetermined and known to all parties. Since this operation acts on quantum states with specific pulse shapes, it implicitly depends on Charlie’s position $y$. We assume that Alice and Bob are located at positions $\pm L/2$. After applying the operation, Charlie returns the transformed state $\ket{\phi_i} = U \ket{\psi_i}$ to Alice and Bob. They then perform measurements on the received state $\ket{\phi_i}$ to estimate Charlie’s position.

To enhance the precision of estimating $y$, it is essential to consider entangled states in the frequency domain
\begin{equation}\begin{aligned}\label{eq:psi}
&\ket{\psi}=\int dk     \Tilde{\psi}(k)\left(\psi_l\frac{(a^\dagger_{k,l})^N}{\sqrt{N!}}+\psi_r\frac{(a^\dagger_{k,r})^N}{\sqrt{N!}}\right)\ket{0},\\
\end{aligned}
\end{equation}
where the subscripts $l$ and $r$ denote the modes on the left (Alice’s side) and right (Bob’s side), respectively, and $\Tilde{\psi}(k)$ denotes the freely chosen pulse shape in the frequency domain. This state is essentially a frequency-domain NOON state. Ensuring normalization of the state requires additional justification, which is provided in Sec.~\ref{SI:precision} of the Supplemental Material. Different choices of $\ket{\psi_i}$ correspond to different $\psi_{l,r}$.

\begin{figure}[!bt]
\begin{center}
\includegraphics[width=1\columnwidth]{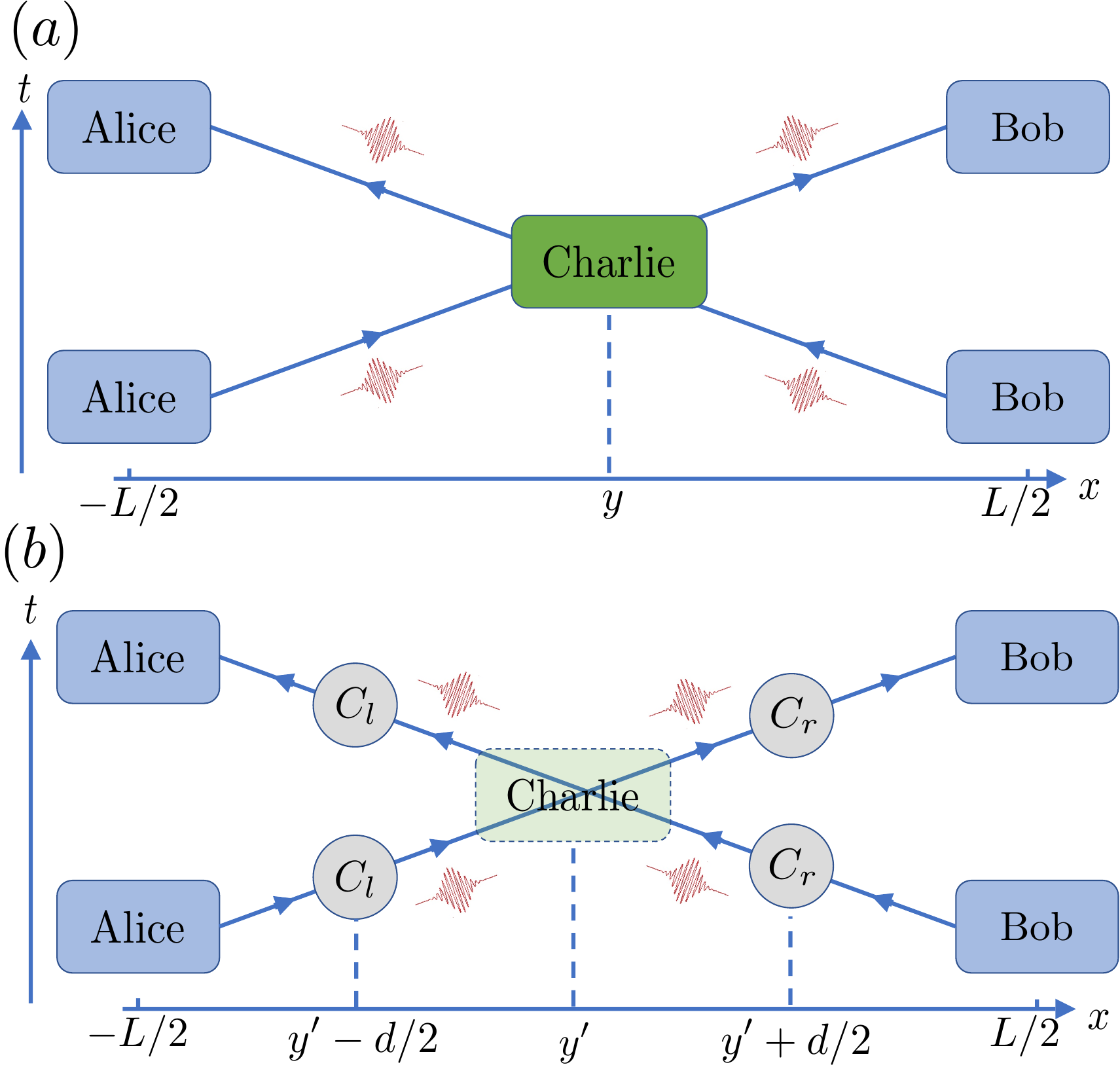}
\caption{Setup for secure quantum ranging. We consider the problem in one dimension, and the plot illustrates the propagation of the pulses in space $x$ and time $t$. (a) The honest prover Charlie, located at position $y$, performs operations on the light received from Alice and Bob. (b) Cheaters, denoted as $C_{l}$ and $C_{r}$, perform local operations at positions $y' \pm d/2$ in an attempt to impersonate Charlie at a fake position $y'$. }
\label{set_up}
\end{center}
\end{figure}

Charlie, located at position $y$, then applies a beam splitter operation to the pulses received from Alice and Bob
\begin{equation}\label{eq:U}
\begin{aligned}
&a_{y,l}^\dagger\rightarrow U_{00}a_{y,l}^\dagger+U_{10}a_{y,r}^\dagger,\,\,a_{y,r}^\dagger\rightarrow U_{01}a_{y,l}^\dagger+U_{11}a_{y,r}^\dagger.\\
\end{aligned}\end{equation}
Note that the beam splitter operation is applied to the spatial mode at position $y$ at each moment in time. As the light pulses propagate through the beam splitter, the entire pulse shape is affected by the operation. After Charlie applies the beam splitter, the resulting states is derived in  Sec.~\ref{SI:precision} of the Supplemental Material as
\begin{equation}
\begin{aligned}\label{eq:phi}
\ket{\phi}&=\frac{\psi_l}{\sqrt{N!}}\int d\vec{x}     \psi_l(\vec{x})\prod_{i=1}^N(U_{00}a_{x_i,l}^\dagger+U_{10}a_{2y-x_i,r}^\dagger)\ket{0}\\
&+\frac{\psi_r}{\sqrt{N!}}\int d\vec{x}\psi_r(\vec{x})     \prod_{i=1}^N(U_{01}a_{x_i,l}^\dagger+U_{11}a_{2y-x_i,r}^\dagger)\ket{0},\\
\end{aligned}
\end{equation}
where $\vec{x} = [x_1, x_2, \cdots, x_N]$, and the pulse shapes are given by $\psi_l(\vec{x}) \propto \psi\left(\sum_i x_i - cNt + \frac{NL}{2}\right)$ and $\psi_r(\vec{x}) \propto \psi\left(\sum_i x_i + cNt - \frac{NL}{2}\right)$, where $\psi(x)$ is the Fourier transform of $\tilde{\psi}(k)$. Given the output state $\ket{\phi}$ as a function of $y$, we can compute the quantum Fisher information (QFI), whose inverse provides a lower bound on the variance of estimating $y$ \cite{braunstein1994statistical,paris2009quantum,kay1993fundamentals}, as detailed in Sec.~\ref{SI:precision} of the Supplemental Material.
Our general expression for the QFI applies to arbitrary choices of the beam splitter operation $U_{ij}$ and pulse shape $\tilde{\psi}(k)$. However, to illustrate the achievable precision more concretely, we present the following theorem using specific choices of $U_{ij}$ and $\tilde{\psi}(k)$.

\begin{thm}\label{thm:QFI_theorem}
If Alice and Bob send the states in Eq. \ref{eq:psi} with $\tilde{\psi}(k) \propto \exp(-k^2 / 2\beta^2)$, and Charlie performs the operation given in Eq. \ref{eq:U} with
\begin{equation}\label{eq:equal_U}
U=\frac{1}{\sqrt{2}}\left[\begin{matrix}
1 & 1\\
-1 & 1
\end{matrix}\right],
\end{equation}
and with any $\psi_l,\psi_r$ and any $y$, the QFI of estimating the position of Charlie $y$ is
\begin{equation}
F=2N(N+1)\beta^2.
\end{equation}
\end{thm}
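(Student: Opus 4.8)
The plan is to use the pure-state quantum Fisher information formula $F = 4\bigl(\langle \partial_y\phi|\partial_y\phi\rangle - |\langle \phi|\partial_y\phi\rangle|^2\bigr)$ together with the key structural feature of Eq.~\eqref{eq:phi}: the position $y$ enters \emph{only} through the reflected (right-mode) operators $a^\dagger_{2y-x_i,r}$, while the transmitted (left-mode) operators $a^\dagger_{x_i,l}$ are $y$-independent. First I would pass to the frequency domain, writing $a^\dagger_{2y-x,r}=\int dq\, e^{iq(2y-x)}\tilde a^\dagger_{q,r}$, so that the only $y$-dependence of each reflected photon is the phase $e^{2iqy}$. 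Hence $\ket{\phi(y)}=e^{-2iy\hat P_r}\ket{\phi(0)}$, where $\hat P_r=\int dq\,q\,\tilde a^\dagger_{q,r}\tilde a_{q,r}$ is the total momentum of the right mode and the reflection is absorbed into $\ket{\phi(0)}$. Differentiating gives $\partial_y\ket\phi=-2i\hat P_r\ket\phi$, so the Fisher information collapses to $F=16\,\mathrm{Var}(\hat P_r)$, evaluated in $\ket\phi$. Since a translation commutes with $\hat P_r$, this variance is manifestly $y$-independent, which already explains why the claimed answer does not depend on $y$.

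Next I would compute $\mathrm{Var}(\hat P_r)$ from the explicit frequency-domain form of $\ket\phi$. The NOON structure of Eq.~\eqref{eq:psi} forces all $N$ photons of each branch to share a single ``collective'' frequency $\kappa$: because the output envelope in Eq.~\eqref{eq:phi} depends only on $\sum_i x_i$, the position integral collapses all individual conjugate momenta to one $\kappa$, weighted by $\tilde\psi(\kappa)$. With the balanced beam splitter of Eq.~\eqref{eq:equal_U}, each photon is independently transmitted into mode $l$ (frequency $\kappa$) or reflected into mode $r$ (frequency $-\kappa$, phase $e^{-2iy\kappa}$), so each branch reduces to $\bigl(\tilde a^\dagger_{\kappa,l}\pm e^{-2iy\kappa}\tilde a^\dagger_{-\kappa,r}\bigr)^N$ on the vacuum, integrated against $\tilde\psi(\kappa)$. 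The two branches weighted by $\psi_l$ and $\psi_r$ originate from counter-propagating inputs and therefore populate distinguishable output sectors, so they add \emph{incoherently}; this is the step that removes all dependence on $\psi_l,\psi_r$, since the interference cross term never contributes to $|{\rm amplitude}|^2$.

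It then remains to evaluate the moments. Expanding each branch binomially, the joint distribution of the number $m$ of reflected photons and the collective frequency $\kappa$ factorizes: $m$ is $\mathrm{Binomial}(N,1/2)$ (from the splitting amplitudes $2^{-N/2}\sqrt{\binom Nm}$) and $\kappa$ is distributed as $|\tilde\psi(\kappa)|^2\propto e^{-\kappa^2/\beta^2}$, independently of $m$. The generator $\hat P_r$ is diagonal with eigenvalue $-\kappa m$ on these terms, so $\mathrm{Var}(\hat P_r)=\langle\kappa^2\rangle\langle m^2\rangle-\langle\kappa\rangle^2\langle m\rangle^2$. Since the Gaussian is even, $\langle\kappa\rangle=0$ and $\langle\kappa^2\rangle=\beta^2/2$, while for the binomial $\langle m^2\rangle=\mathrm{Var}(m)+\langle m\rangle^2=N/4+N^2/4=N(N+1)/4$. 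Crucially, the vanishing spectral mean makes the QFI pick up the full second moment $\langle m^2\rangle$ rather than only $\mathrm{Var}(m)$, which is the origin of the Heisenberg $N(N+1)$ factor. Assembling, $\mathrm{Var}(\hat P_r)=\tfrac{\beta^2}{2}\cdot\tfrac{N(N+1)}{4}$ and $F=16\,\mathrm{Var}(\hat P_r)=2N(N+1)\beta^2$.

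The main obstacle I anticipate is the rigorous handling of the continuous-mode normalization flagged after Eq.~\eqref{eq:psi}: the frequency-domain NOON components carry singular $\delta(0)$-type factors, and I must verify that they appear identically in the numerator $\langle\partial_y\phi|\partial_y\phi\rangle$ and in the norm $\langle\phi|\phi\rangle$, so that they cancel and leave the finite moments above (this is where I would lean on the regularization of Sec.~\ref{SI:precision}). The second delicate point is justifying, directly from the position-space Eq.~\eqref{eq:phi}, both the precise translation generator $\hat P_r$ and the claimed non-interference of the two NOON branches; getting the reflection phase $e^{2iqy}$ and the orthogonality of the counter-propagating branches right is exactly what guarantees the clean result that is independent of $\psi_{l},\psi_{r}$ and of $y$ for every $N$.
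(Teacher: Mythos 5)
Your route --- writing the $y$-dependence as generated by a Hermitian operator and invoking $F=4\,\mathrm{Var}(G)$, then reducing that variance to moments of a $\mathrm{Binomial}(N,1/2)$ reflected-photon number $m$ and a Gaussian collective frequency $\kappa$ --- is genuinely different from the paper's proof, which expands the fidelity $|\bra{\phi_y}\ket{\phi_{y+\delta y}}|$ directly via Gaussian integrals and the identity $\sum_Q Q^2\binom{N}{Q}=N(N+1)2^{N-2}$ (Sec.~\ref{SI:precision}). Your endgame is correct and arguably more illuminating: $\langle m^2\rangle=N(N+1)/4$, $\langle\kappa^2\rangle=\beta^2/2$, $\langle\kappa\rangle=0$ give $16\cdot\tfrac{\beta^2}{2}\cdot\tfrac{N(N+1)}{4}=2N(N+1)\beta^2$, and the observation that the full second moment of $m$ (rather than its variance) enters because the spectral mean vanishes is exactly the right explanation of the Heisenberg factor.

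Two steps in the middle, however, do not hold as stated. First, there is no single generator $\hat P_r$: in the correctly derived output state (Eq.~\ref{eq:phi_SI}, which is the authoritative version of Eq.~\ref{eq:phi}), the branch originating from Bob carries its $y$-dependence in the rightward-moving operators $a^\dagger_{2y-x_i,l}$, not in the $r$ sector, so $\ket{\phi(y)}=e^{-2iy\hat P_r}\ket{\phi(0)}$ fails on that branch and $F=16\,\mathrm{Var}(\hat P_r)$ is not literally valid. Second, the two branches are not ``distinguishable output sectors'': at $y=0$ they occupy exactly the same spatial modes (both place photons near $\sum_i x_i = cNt-NL/2$ in the $l$ sector and near its mirror image in the $r$ sector). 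Their orthogonality instead comes from the unitarity of $U$ --- the cross term in the norm carries $(u_0^\dagger u_1)^N=0$ --- while killing the derivative cross terms additionally requires the vanishing of the alternating moments $\sum_Q(-1)^Q Q^k\binom{N}{Q}$, which is precisely the nontrivial cancellation the paper's proof performs. The repair is clear but necessary: assign each branch its own generator ($2\hat P_r$ for Alice's branch, $2\hat P_l$ for Bob's), establish the vanishing of all cross terms by the column-orthogonality and alternating-sum argument, and then use the symmetry between the branches so that each contributes the same variance; your moment computation then applies branchwise and yields the stated result independent of $\psi_l$, $\psi_r$, and $y$.
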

The proof of the theorem is provided in Sec.~\ref{SI:precision} of the Supplemental Material. By employing entangled probe states in the frequency domain, the QFI is shown to scale quadratically with $N$, thereby achieving Heisenberg-limited precision. This demonstrates that frequency entanglement can be effectively used to enhance the accuracy of position estimation.

Note that in typical quantum ranging protocols, such as those in Refs.~\cite{giovannetti2001quantum,maccone2020quantum}, light is sent from only one direction, and the position is estimated solely from the reflected signal. This contrasts with our setting, where light arrives from both sides and a local operation is performed at Charlie’s location. The single-sided ranging scenario can also be captured within our framework as a special case, and corresponds to the following proposition.
\begin{proposition}\label{prop:qfi_single_side}
In the single-sided ranging scenario, Alice and Bob send the states defined in Eq.~\ref{eq:psi} with $\tilde{\psi}(k) \propto \exp(-k^2 / 2\beta^2)$, $\psi_l=1$, $\psi_r=0$, and Charlie performs the operation
\begin{equation}\label{eq:U_reflection}
U=\left[\begin{matrix}
0 & 1\\
1 & 0
\end{matrix}\right].
\end{equation}
 We can calculate the QFI of estimating $y$ as
\begin{equation}
F=8\beta^2 N^2.
\end{equation}    
\end{proposition}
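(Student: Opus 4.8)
The plan is to compute the QFI directly from the pure-state formula
\[
F = 4\left(\braket{\partial_y \phi}{\partial_y \phi} - |\braket{\phi}{\partial_y \phi}|^2\right),
\]
after first exhibiting the $y$-dependence of the reflected state as a single collective phase, which collapses the QFI to a multiple of the variance of the photon frequency. First I would specialize the general output state in Eq.~\ref{eq:phi}: with $\psi_l=1$, $\psi_r=0$ and the swap $U$ of Eq.~\ref{eq:U_reflection} (so $U_{00}=0$, $U_{10}=1$), only the first line survives and reduces to an $N$-photon state living entirely on Bob's mode, each creation operator carrying the reflection $x_i \to 2y-x_i$. It is cleaner to track this in frequency space, where the spatial reflection about $y$ acts as $a^\dagger_{k,l}\to e^{2iky}a^\dagger_{-k,r}$, giving
\[
\ket{\phi(y)} = \int dk\,\tilde{\psi}(k)\,e^{2iNky}\,\frac{(a^\dagger_{-k,r})^N}{\sqrt{N!}}\ket{0}.
\]
The key structural point is that all $N$ photons share the common frequency $k$, so the reflection phase appears as $e^{2iNky}$ — the factor of $N$ is exactly what lifts the scaling to Heisenberg level.

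Second, I would write $\ket{\phi(y)}=e^{iyG}\ket{\phi(0)}$, where the Hermitian generator $G$ acts on the frequency component $\ket{k}_N := (a^\dagger_{-k,r})^N\ket{0}/\sqrt{N!}$ by $G\ket{k}_N = 2Nk\ket{k}_N$. For any such phase-encoded family the QFI is $F=4\,\mathrm{Var}_{\phi(0)}(G)$, with $\braket{\partial_y\phi}{\partial_y\phi}=\langle G^2\rangle$ and $|\braket{\phi}{\partial_y\phi}|^2=\langle G\rangle^2$. Since $G=2N\hat k$ on the support of the state, this gives $F = 16N^2\,\mathrm{Var}(k)$, where $\mathrm{Var}(k)$ is the variance of the common frequency under the weight fixed by $\tilde{\psi}$. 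The $y$-independence claimed in the proposition is then automatic, because $G$, and hence its variance, carries no $y$-dependence.

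Third, I would evaluate the variance for the Gaussian profile. With $\tilde{\psi}(k)\propto\exp(-k^2/2\beta^2)$ the effective weight is $|\tilde{\psi}(k)|^2\propto\exp(-k^2/\beta^2)$, a centred Gaussian of variance $\beta^2/2$. Hence $\mathrm{Var}(k)=\beta^2/2$ and $F = 16N^2\cdot\beta^2/2 = 8\beta^2 N^2$, as stated. It is worth contrasting this with Theorem~\ref{thm:QFI_theorem}: there the generator mixes the two incoming arms through the balanced beam splitter, producing the $2N(N+1)\beta^2$ coefficient, whereas here a single arm is reflected with the full collective phase $2Nk$, changing the prefactor.

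The main obstacle is justifying the claim, used in the second and third steps, that the effective measure over the common frequency $k$ is exactly the normalized $|\tilde{\psi}(k)|^2$. The states $\ket{k}_N$ are monochromatic $N$-photon states whose overlaps ${}_N\langle k|k'\rangle_N = [\delta(k-k')]^N$ are ill-defined for $N\ge 2$ — precisely the normalization subtlety flagged after Eq.~\ref{eq:psi} and treated in Sec.~\ref{SI:precision}. The safe route is therefore to reuse the regularized general QFI expression derived there for Theorem~\ref{thm:QFI_theorem}, substitute $\psi_l=1$, $\psi_r=0$ together with the swap $U$, and let the already-justified normalization carry through, after which the remaining Gaussian integral should collapse to $8\beta^2 N^2$. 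Along the way one must check that the cross term $|\braket{\phi}{\partial_y\phi}|^2=\langle G\rangle^2$ subtracts correctly, so that only the variance — and not the raw second moment — of the frequency survives.
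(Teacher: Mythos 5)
Your proposal is correct, and it reaches the result by a genuinely different route from the paper. The paper has no standalone proof of this proposition: it follows by specializing the general fidelity-based QFI expression derived in Sec.~\ref{SI:precision}. There, the regularized $N$-photon wavefunctions are propagated in position space, the overlap $\abs{\bra{\phi_y}\ket{\phi_{y+\delta y}}}$ is computed term by term (yielding the factors $\delta_{pq}$), and for $\psi_l=1$, $\psi_r=0$ with the swap $U$ only the $\vec i=(1,\dots,1)$ term survives, giving $\delta_{11}=\exp(-\beta^2N^2\delta y^2)$ and hence $F=8\beta^2N^2$ directly from $F=\frac{8}{\delta y^2}[1-\abs{\bra{\phi_y}\ket{\phi_{y+\delta y}}}]$. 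You instead identify the $y$-dependence as a collective phase $e^{\pm 2iNky}$ generated by $G=2N\hat k$ and invoke $F=4\,\mathrm{Var}(G)$; this is more conceptual and makes the Heisenberg scaling transparent (the factor $N$ sits inside the generator), whereas the paper's route is a mechanical substitution but comes with the regularization already built in. The one step you correctly flag as delicate --- that the effective single-frequency measure is the normalized $\abs{\tilde\psi(k)}^2$ with variance $\beta^2/2$, despite the ill-defined overlaps $[\delta(k-k')]^N$ --- is exactly what the paper's $\sigma$-regularized computation establishes: its $\delta_{11}=\exp(-\beta^2N^2\delta y^2)$ is the characteristic function of that Gaussian measure evaluated at $2N\delta y$, so your fallback of importing the regularized formula closes the gap and the two derivations agree. (The sign of the phase you wrote, $e^{+2iky}$ versus the $e^{-2iky}$ implied by the paper's Fourier conventions, is immaterial for the variance.)
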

In this single-sided case, the QFI is approximately four times larger than that of our chosen scheme in Theorem \ref{thm:QFI_theorem} when $N$ is sufficiently large. Although this traditional approach achieves better precision by a constant factor, we will show below that it is not secure against cheating—even when adversaries are restricted to local beam splitter operations without any entanglement. In contrast, our scheme remains secure while still achieving quadratic scaling of the QFI with respect to $N$.

\textit{Security of the ranging measurement} - We now examine the security of this quantum ranging method. Following the standard setting of QPV as in Refs.~\cite{beausoleil2006tagging, malaney2010quantum, malaney2010location, kent2011quantum, buhrman2014position, vaidman2003instantaneous, beigi2011simplified, bluhm2022single, asadi2025linear,gonzales2019bounds, speelman2016instantaneous, chakraborty2015practical}, we consider two cheaters positioned at $y' \pm d/2$, where $y'$ is the fake position they attempt to attribute to Charlie, while Charlie’s actual position is $y$. Note that if the two cheaters share an unlimited amount of entanglement, then—together with local operations—they can simulate any operation that Charlie performs, as is well known in standard QPV settings \cite{vaidman2003instantaneous}. Therefore, the security of QPV protocols fundamentally relies on limiting the capabilities of the cheaters. Since in our protocol the honest prover Charlie is only required to perform a simple beam splitter operation, we impose a corresponding restriction on the cheaters’ capabilities in this discussion. We assume that the cheaters have full knowledge of the operation $U$ performed at Charlie’s location, as well as the ensemble of states used by Alice and Bob, $\{p_i, \rho_i\}$. However, they do not know which specific state $\ket{\psi_i}$ is sent in each round, since it is randomly selected by Alice and Bob. This randomness is essential: if the cheaters know in advance which state $\ket{\psi}$ would be used, they could simply discard the light from Alice and Bob and prepare the corresponding output state themselves, perfectly reproducing $\ket{\phi}$ without any interaction. To avoid this security vulnerability, Alice and Bob randomly choose the input state from the predefined ensemble $\{p_i, \ket{\psi_i}\}$.

Each cheater is allowed to introduce an ancillary mode initially in vacuum states at their location and perform any local two-port beam splitter operations. They are also equipped with quantum memory, allowing them to delay pulses arbitrarily without altering their shape.
In the first round of operations, the left and right cheaters apply beam splitter operations $V$ and $W$, respectively, on the incoming mode and their ancillary mode. Each cheater then stores their ancillary mode locally in a delay line for a duration of $d/c$. After this delay, they perform a second set of operations, $P$ and $Q$, on two modes: their stored ancillary mode and the mode received from the other side.
As detailed in Sec.~\ref{SI:state_cheater} of the Supplemental Material, the resulting state prepared by the cheaters is
\begin{equation}
\begin{aligned}
\ket{\gamma}&=\frac{\psi_l}{\sqrt{N!}}\int d\vec{x}     \psi_l(\vec{x})\prod_{k=1}^N(V_{00}P_{00}a_{x_k,l}^\dagger+V_{00}P_{01}b_{x_k,r}^\dagger\\
&\quad\quad\quad\quad+V_{01}Q_{10}a_{2y'-x_k,r}^\dagger+V_{01}Q_{11}b_{2y'-x_k,l}^\dagger)\ket{0}\\
&+\frac{\psi_r}{\sqrt{N!}}\int d\vec{x}     \psi_r(\vec{x})\prod_{k=1}^N(W_{01}P_{10}a_{x_k,l}^\dagger+W_{01}P_{11}b_{x_k,r}^\dagger\\
&\quad\quad\quad\quad+W_{00}Q_{00}a_{2y'-x_k,r}^\dagger+W_{00}Q_{01}b_{2y'-x_k,l}^\dagger)\ket{0},\\
\end{aligned}
\end{equation}
where $V_{ij}$ denotes the matrix elements of  $V$, with similar definitions for $W$, $P$, and $Q$.  The cheaters aim to optimize the operations $V$, $W$, $P$, and $Q$ such that the resulting state $\ket{\gamma}$ closely approximates the target state $\ket{\phi}$ that would have been produced by a fake Charlie located at position $y'$, as defined in Eq.~\ref{eq:phi}, where $y' \neq y$. If the cheaters can exactly reproduce the state $\ket{\phi}$ corresponding to the fake position $y'$, they can successfully spoof Charlie’s location.
Having specified the operations available to the two cheaters, we now consider two key scenarios in which they are able to perfectly simulate Charlie’s presence at the fake position.

\begin{proposition}\label{prop:reflection}
If Charlie's operation is simply the reflection given in Eq.~\ref{eq:U_reflection}, then for any $\psi_{l,r}$, $y'$, and $N$, the cheaters can always prepare $\ket{\gamma} = \ket{\phi}$.
\end{proposition}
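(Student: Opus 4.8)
The plan is to specialize the general expressions to the reflection and then exhibit an explicit, entanglement-free strategy for the two cheaters. First I would substitute $U_{00}=U_{11}=0$ and $U_{01}=U_{10}=1$ into the target state of Eq.~\ref{eq:phi} evaluated at the fake position $y'$. Each $N$-fold product then collapses, since only one summand in every factor survives: the $\psi_l$ branch becomes $\prod_{k=1}^N a_{2y'-x_k,r}^\dagger$ and the $\psi_r$ branch becomes $\prod_{k=1}^N a_{x_k,l}^\dagger$. The key structural observation is that a pure reflection never mixes the two incoming beams at a single point—Alice's light is simply returned to Alice and Bob's light to Bob—so no instantaneous cross-interaction between the beams is required, which is precisely why the task should be reproducible locally without any shared entanglement.

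With the target in this factorized form, the problem reduces to choosing the (unitary) beam-splitter matrices $V,W,P,Q$ so that $\ket{\gamma}$ matches it factor by factor. I would read the coefficient conditions directly off the expression for $\ket{\gamma}$: in the $\psi_l$ branch the ancillary ($b$-mode) amplitudes $V_{00}P_{01}$ and $V_{01}Q_{11}$ and the spurious amplitude $V_{00}P_{00}$ must vanish, while $V_{01}Q_{10}=1$ is needed to select $a_{2y'-x_k,r}^\dagger$; in the $\psi_r$ branch the amplitudes $W_{01}P_{11}$, $W_{00}Q_{00}$, $W_{00}Q_{01}$ must vanish, while $W_{01}P_{10}=1$ selects $a_{x_k,l}^\dagger$. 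These eight conditions are coupled across the two branches because $P$ and $Q$ are shared between them, so the only genuine point to check—and the modest obstacle of the argument—is that a single legitimate unitary assignment satisfies all of them at once rather than forcing a conflict in $P$ or $Q$.

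The resolution is to take all four operations to be the swap, $V=W=P=Q=\left[\begin{smallmatrix}0&1\\1&0\end{smallmatrix}\right]$, which is a valid two-port beam splitter. Then $V_{00}=W_{00}=0$ annihilates every term in the $\psi_l$ branch carrying $P$ and every term in the $\psi_r$ branch carrying $Q$, so the potentially conflicting shared amplitudes never appear; the surviving products evaluate to $V_{01}Q_{10}=W_{01}P_{10}=1$ while $Q_{11}=P_{11}=0$ kills the remaining $b$-mode terms, leaving exactly $\prod_k a_{2y'-x_k,r}^\dagger$ and $\prod_k a_{x_k,l}^\dagger$ and returning the ancillas to vacuum. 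Hence $\ket{\gamma}=\ket{\phi}$ identically. I would close by noting that the construction is independent of $\psi_l,\psi_r$ (they enter only as common branch prefactors), of $N$ (the verification is per factor in the product), and of $y'$ (the label $2y'-x_k$ is already produced by storing each incoming beam in the delay line for the duration $d/c$, which exactly compensates the $d/2$ geometric offset of the cheaters from $y'$). The physical meaning of $V_{00}=W_{00}=0$ is that each cheater first routes the incoming pulse entirely into its delay line and later re-emits it—a \emph{store-and-reflect} strategy needing no entanglement—so the shared-operator obstacle dissolves under this choice.
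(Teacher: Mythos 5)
Your construction is correct and is essentially the paper's own proof: the paper likewise sets $V_{01}=Q_{10}=W_{01}=P_{10}=1$ and $V_{00}=P_{00}=W_{00}=Q_{00}=0$, which is exactly your swap assignment, so that $U'_{pq}=U_{pq}$ and $\ket{\gamma_{y'}}=\ket{\phi_{y'}}$ for all $\psi_{l,r}$, $y'$, and $N$. Your additional factor-by-factor check that the shared matrices $P$ and $Q$ cause no conflict, and the store-and-reflect reading of $V_{00}=W_{00}=0$, are consistent with (and slightly more explicit than) the paper's one-line verification.
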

\begin{proposition}\label{prop:single_side}
If the light is always sent from a fixed side—for instance, from Alice with $\psi_l = 1$ and $\psi_r = 0$, and there is no randomization over the input states—then, for any photon number $N$, position $y$, and unitary $U$, the cheaters can always prepare a state $\ket{\gamma} = \ket{\phi}$ by  perfectly simulating the action of $U$ on the single-sided input or discarding the original states and generating their own.
\end{proposition}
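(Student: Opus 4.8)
The plan is to prove Proposition~\ref{prop:single_side}, which asserts that single-sided ranging with no state randomization is completely insecure. Because the light is always sent from Alice ($\psi_l=1$, $\psi_r=0$) and the input state is fixed, the cheaters face a much easier task than in the general setting: they need to reproduce only a single, known output state $\ket{\phi}$, rather than an entire ensemble of unknown states. I would emphasize at the outset that the proposition offers the cheaters \emph{two} distinct winning strategies, and that exhibiting either one suffices.

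\medskip

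For the first strategy I would argue that the cheaters can faithfully simulate the action of $U$ on the single-sided input. With $\psi_r=0$, the target state $\ket{\phi}$ from Eq.~\ref{eq:phi} collapses to a single product over modes built from $U_{00}a^\dagger_{x_i,l}+U_{10}a^\dagger_{2y-x_i,r}$, but with $y$ replaced by the fake position $y'$. The key point is that the only entangling structure in $\ket{\phi}$ couples a left-propagating component at $x_i$ with a right-propagating component at $2y'-x_i$; since all photons originate on Alice's side, each incoming photon can be routed and split locally by the cheaters' beam-splitter operations, and the quantum memory (delay line of duration $d/c$) lets them align the timing so that the reflected branch emerges as if it had been reflected at $y'$. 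I would verify this by substituting $\psi_r=0$ into the general cheater state $\ket{\gamma}$ and showing that an appropriate choice of $V,W,P,Q$ (with $W$ irrelevant since no right-side input is present) makes the surviving product factor match $U_{00}a^\dagger_{x_k,l}+U_{10}a^\dagger_{2y'-x_k,r}$ term by term. This is a local, unentangled realization because the single optical mode carrying all $N$ photons never needs to be split across the two cheaters in a correlated way.

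\medskip

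For the second strategy, which is even simpler, I would invoke the absence of randomization directly: since the ensemble is a single known state $\ket{\psi}$ and $U$ is public, the entire output $\ket{\phi}=U\ket{\psi}$ is known in advance to the cheaters. They can therefore discard the incoming light entirely and simply \emph{prepare} $\ket{\phi}$ for the fake position $y'$ from scratch using local operations, returning it to Alice and Bob with the correct timing enforced by their delay lines. This is precisely the vulnerability flagged in the main text (``if the cheaters know in advance which state $\ket{\psi}$ would be used, they could simply discard the light\ldots''), so the second strategy is immediate once determinism is assumed.

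\medskip

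I expect the main obstacle to lie in the first strategy rather than the second: one must check carefully that the timing and mode-routing constraints imposed by the delay-line of fixed duration $d/c$ are compatible with producing the reflected component at exactly $2y'-x_k$ for \emph{all} $x_k$ simultaneously, and that the beam-splitter coefficients can be chosen to reproduce both $U_{00}$ and $U_{10}$ without needing entanglement. Since $\psi_r=0$ removes the second summand of $\ket{\gamma}$ and decouples the two cheaters' contributions, I anticipate this reduces to matching two scalar coefficients and one spatial shift, which the available local operations can satisfy; the bookkeeping of which cheater handles the transmitted versus reflected branch is the only delicate part, and it is exactly the structure made explicit in the expression for $\ket{\gamma}$ in Sec.~\ref{SI:state_cheater}.
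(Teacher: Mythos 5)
Your proposal is correct and follows essentially the same route as the paper's proof: the discard-and-prepare strategy is immediate from the absence of randomization, and for the simulation strategy the paper makes your ``matching two scalar coefficients'' step explicit by choosing $P_{00}=Q_{10}=1$, $V_{00}=U_{00}$, $V_{01}=U_{10}$, so that with $\psi_r=0$ the surviving factor of $\ket{\gamma}$ becomes $U_{00}a^\dagger_{x_k,l}+U_{10}a^\dagger_{2y'-x_k,r}$ exactly as in $\ket{\phi_{y'}}$. The timing concern you flag is handled in the paper just as you anticipate, by using the quantum memory to synchronize the output with the expected arrival time from the fake position.
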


The proofs of Propositions~\ref{prop:reflection} and \ref{prop:single_side} are provided in Sec.~\ref{proof_prop:reflection} of the Supplemental Material. These results show that standard quantum ranging protocols based solely on reflecting a fixed, single-sided input—such as those in Refs. \cite{giovannetti2001quantum,maccone2020quantum,zhuang2022ultimate,zhuang2021quantum}—cannot detect the presence of potential cheaters. To ensure security, it is essential to introduce randomness in the choice of two-sided input states and to employ general operations $U$ at Charlie's location.
Although the QFI for position estimation in these single-sided protocols, as stated in Proposition~\ref{prop:qfi_single_side}, exceeds that of our two-sided ranging protocol in Theorem~\ref{thm:QFI_theorem} by a constant factor, such protocols provide no mechanism to detect cheating. In contrast, our protocol enables the detection of potential cheaters while maintaining estimation precision which achieves the Heisenberg scaling. We now proceed to analyze the error     probability of detecting cheating when general operations $U$ and random two-sided input states are used.

\begin{thm}\label{thm:optimal_gamma}
For any fake position $y'$ and a set of randomized input states from Alice and Bob, denoted by $\{p_i, \ket{\psi_i}\}$—where $\psi_{il}$ and $\psi_{ir}$ are the amplitudes of the state $\ket{\psi_i}$ at Alice’s and Bob’s input ports, respectively. The total error probability—defined as the sum of the probability of mistakenly assuming there are no cheaters when cheaters are present, and the probability of falsely assuming there are cheaters when there are none—is given by:
\begin{equation} P = \max\{P_1, P_2\}, \end{equation}
where $P_1$ is the total error probability when the cheaters attempt to apply approximate operations to the original input states, and $P_2$ is the total error probability when the cheaters discard the original states and instead prepare their own state. Note that for both $P_1$ and $P_2$, we assume that Alice and Bob perform optimal measurements to detect the presence of cheaters, while the cheaters adopt their best possible strategies in each respective case.
The error probability $P_1$ and $P_2$ can be bounded as follows:
\begin{equation}\begin{aligned}
 &P_1\leq \sum_ip_i\max\{|\psi_{il}|^2,|\psi_{ir}|^2\},\quad N\rightarrow\infty\\
 &P_2\leq \sqrt{\sum_{ij}p_i p_j |\bra{\psi_i}\ket{ \psi_j}|}.
\end{aligned}
\end{equation}
\end{thm}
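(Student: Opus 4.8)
The plan is to recast cheating detection as a binary state-discrimination problem and to upper-bound the optimal error by exhibiting one natural measurement. For an input $\ket{\psi_i}$ the honest prover returns $\ket{\phi_i}=U\ket{\psi_i}$, so I would let Alice and Bob perform the projective test $\{\ket{\phi_i}\bra{\phi_i},\,I-\ket{\phi_i}\bra{\phi_i}\}$ and accept only on the first outcome. Against an honest prover this never triggers a false alarm, so the false-positive probability vanishes and the total error collapses to the average acceptance probability of the cheaters' state, $\sum_i p_i\,|\braket{\phi_i}{\gamma_i}|^2$. Because the genuinely optimal measurement can only lower the error, this quantity upper-bounds the true detection error for each cheating strategy, and the overall error $P=\max\{P_1,P_2\}$ expresses the adversary's freedom to adopt whichever of the two strategies is harder to detect.

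For $P_2$ the cheaters discard the input and, ignorant of $i$, commit to a single fixed state $\ket{\gamma}$; the average acceptance probability is then $\sum_i p_i|\braket{\phi_i}{\gamma}|^2=\bra{\gamma}\rho\ket{\gamma}$ with $\rho=\sum_i p_i\ket{\phi_i}\bra{\phi_i}$, and maximizing over $\ket{\gamma}$ gives the top eigenvalue $\lambda_{\max}(\rho)$. I would then bound the operator norm by the Frobenius norm, $\lambda_{\max}(\rho)\le\sqrt{\mathrm{Tr}(\rho^2)}$, compute $\mathrm{Tr}(\rho^2)=\sum_{ij}p_ip_j|\braket{\phi_i}{\phi_j}|^2$, use unitarity of $U$ to replace $\braket{\phi_i}{\phi_j}$ by $\braket{\psi_i}{\psi_j}$, and finally apply $|\braket{\psi_i}{\psi_j}|^2\le|\braket{\psi_i}{\psi_j}|$. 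This yields $P_2\le\sqrt{\sum_{ij}p_ip_j|\braket{\psi_i}{\psi_j}|}$ and uses only the spectral bound and the unitary invariance of overlaps.

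The real work is the bound on $P_1$, where the cheaters route the genuine input through their fixed beam-splitters $V,W,P,Q$ and delay lines, and I must show $|\braket{\phi_i}{\gamma_i}|^2$ cannot beat $\max\{|\psi_{il}|^2,|\psi_{ir}|^2\}$ as $N\to\infty$. The plan is to exploit the two-branch structure of $\ket{\phi_i}$: the left- and right-origin $N$-photon components are products of single-particle creation operators, so each branch overlaps the cheaters' output as a single-particle overlap raised to the $N$-th power. I would argue that, without shared entanglement and constrained to recombine the two sides only through a finite delay, the cheaters can align their single-particle map with at most one branch; the mismatched branch then carries a single-particle overlap strictly below one, so its contribution decays like $|c|^N$ and is washed out in the limit. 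Since the two honest branches also become orthogonal at large $N$, the surviving overlap is just the amplitude of the branch the cheaters chose to match, and optimizing that choice gives $|\braket{\phi_i}{\gamma_i}|\to\max\{|\psi_{il}|,|\psi_{ir}|\}$, hence the stated bound.

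The main obstacle is exactly this asymptotic overlap estimate. I expect the difficulty to be twofold: first, establishing rigorously that no admissible choice of the cheaters' local operations can coherently reconstruct both NOON branches at once, which requires careful bookkeeping of which spatio-temporal modes each separated cheater can populate under the causality and delay constraints; and second, controlling the off-diagonal cross-branch overlaps uniformly so that they too vanish, and verifying that the suppression is genuinely geometric, $|c|^N$ with $|c|<1$ for every admissible cheater parametrization, rather than merely tending to zero. Pinning down this geometric decay is the crux that makes the $N\to\infty$ limit—and thus the security claim—precise.
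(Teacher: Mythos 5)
Your overall architecture matches the paper's: the projective test $\{\ket{\phi_i}\bra{\phi_i},\,I-\ket{\phi_i}\bra{\phi_i}\}$ with zero false-alarm rate, reducing both $P_1$ and $P_2$ to the average acceptance amplitude of the cheaters' output. For $P_2$ your route is genuinely different from the paper's and is correct: the paper invokes an external result bounding $\max_{\rho_0}\sum_i p_i\sqrt{F(\rho_0,\phi_i)}$ by $\sqrt{\sum_{ij}p_ip_j\sqrt{F(\phi_i,\phi_j)}}$ and then relaxes $F\le\sqrt{F}$, whereas you observe directly that the acceptance probability is $\Tr(\rho_0\rho)\le\lambda_{\max}(\rho)\le\sqrt{\Tr(\rho^2)}=\sqrt{\sum_{ij}p_ip_j|\bra{\psi_i}\ket{\psi_j}|^2}$, which is self-contained and in fact tighter before the final relaxation $|\bra{\psi_i}\ket{\psi_j}|^2\le|\bra{\psi_i}\ket{\psi_j}|$. (Your restriction to pure $\ket{\gamma}$ is without loss of generality since $\Tr(\rho_0\rho)$ is maximized on an eigenvector of $\rho$.)

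The $P_1$ bound, however, is left as a plan at precisely the step that carries the content of the theorem. You assert that the cheaters "can align their single-particle map with at most one branch" and that the mismatched branch decays like $|c|^N$ with $|c|<1$, but you do not prove either claim, and you flag this yourself as the crux. The paper closes this gap concretely: the cheaters' composite action is an effective (non-unitary) single-photon matrix $U'$ with entries $U'_{00}=V_{00}P_{00}$, $U'_{10}=V_{01}Q_{10}$, $U'_{01}=W_{01}P_{10}$, $U'_{11}=W_{00}Q_{00}$, each a \emph{product} of entries of unitaries, so its columns $u_q'$ satisfy $|u_q'|\le1$. A Cauchy--Schwarz estimate gives
\begin{equation}
\Big|\sum_{\vec i}\prod_{j=1}^N U_{i_j,p}^*U'_{i_j,q}\,\delta_{p+1,q+1}\Big|\le\big(|U_{0p}U'_{0q}|+|U_{1p}U'_{1q}|\big)^N\le|u_q'|^N,
\end{equation}
and one checks that $|u_0'|=|u_1'|=1$ forces $U'$ to be diagonal or anti-diagonal up to phases, in which case $u_q'$ cannot be parallel to the corresponding column of a generic $U$, so equality in Cauchy--Schwarz fails and every such term still decays. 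Hence at most one column can be matched exactly, the other three of the four terms in $\bra{\phi_i}\ket{\gamma_i}$ vanish geometrically, and the surviving term is weighted by $|\psi_{il}|^2$ or $|\psi_{ir}|^2$. Note also that your claimed limit $|\bra{\phi_i}\ket{\gamma_i}|\to\max\{|\psi_{il}|,|\psi_{ir}|\}$ is not what this analysis yields: the matched branch appears with the same amplitude in both $\ket{\phi_i}$ and $\ket{\gamma_i}$, so the overlap is bounded by $\max\{|\psi_{il}|^2,|\psi_{ir}|^2\}$; this only strengthens the final inequality, but it signals that the branch bookkeeping in your sketch has not actually been carried out. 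Without the explicit parametrization of $U'$ and the column-norm argument, the claim that no admissible $V,W,P,Q$ can beat the single-branch strategy remains unproven.
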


The proof is provided in Sec.~\ref{proof_thm:optimal_gamma} of the Supplemental Material. Since $P_1$ and $P_2$ are defined under the assumption that both Alice and Bob, as well as the cheaters, adopt optimal strategies, one might expect a concrete value for each given set $\{p_i, \ket{\psi_i}\}$. However, we are unable to find a closed-form expression for  $P_{1,2}$.
Furthermore, the bound for $P_1$ relies on the assumption of an infinite number of photons in the entangled state, i.e., $N \to \infty$. For finite $N$, the error probability depends on the specific choices of $U$, $\psi_{l,r}$, and the cheaters’ strategies, including their local operations $W$, $V$, $P$, and $Q$. In general, optimizing over the cheaters’ strategies does not yield a closed-form solution for arbitrary $U$ and $\psi_{l,r}$. Therefore, we only provide a general expression for the error probability in Sec.~\ref{SI:state_cheater} of the Supplemental Material.

Note that the overall error probability $P$ is fundamentally bounded away from zero, since the error probability $P_2$ follows the lower bound $P_2 \geq 1 - \frac{1}{2} \sum_{i,j} p_i p_j \| \rho_i - \rho_j\|$ as shown in Sec.~\ref{proof_thm:optimal_gamma} of the Supplemental Material, where $\|.\|$ denotes the trace norm. This limitation arises because, for any ensemble of input states $\{p_i, \ket{\psi_i}\}$, the cheaters can always discard the incoming states from Alice and Bob and instead return a fixed state $\rho_0$ of their own choosing. Since $\rho_0$ can always partially approximate the ensemble $\{p_i, \ket{\psi_i}\}$, the overall error probability is lower bounded by a constant.

We emphasize that the above bound applies to the error probability of detecting cheating from a single copy of the state. However, as long as there is a constant (nonzero) probability of detecting cheaters in each round, the detection can be repeated across multiple copies of the state, eventually achieving an overall error probability arbitrarily close to zero. Such repetition is already necessary for position estimation, which requires multiple copies to establish statistical confidence. Therefore, a constant per-copy error probability is sufficient to ensure the security of the quantum ranging protocol.

\textit{Simultaneously achieving Heisenberg-limited precision and reliable cheater detection} - An important remaining question is whether the QFI in Theorem~\ref{thm:QFI_theorem} and the error probability in Theorem~\ref{thm:optimal_gamma} can be saturated by a simple measurement strategy. In the case of single-parameter estimation, the QFI bound can always be saturated, and the optimal error probability for detecting cheaters can also be achieved with an appropriate measurement \cite{braunstein1994statistical,paris2009quantum,kay1993fundamentals,holevo2011probabilistic}. However, these two optimal measurements are generally  difficult to implement in practice.
In the following Theorem~\ref{thm:simultaneous}, we present a particularly simple and physically motivated measurement strategy—based on photon counting and the use of a beam splitter—that simultaneously achieves both objectives. This method provides Heisenberg-scaling precision for position estimation while maintaining a constant error probability for detecting cheaters, all within a  unified measurement framework.

\begin{thm}\label{thm:simultaneous}
Consider an ensemble consisting of two states $\ket{\psi_1}$ and $\ket{\psi_2}$, chosen with equal probability.  Charlie always applies the unitary operation $U$ defined in Eq.~\ref{eq:equal_U}.
For detection, Alice and Bob interfere the incoming light using an additional beam splitter operation $R = U$, and then perform local single-photon measurements at each spatial mode. This projects onto states of the form $ \ket{\vec{q}, \vec{z}} = \prod_{j=1}^N b_{z_j, q_j}^\dagger \ket{0}$,
where $\vec{q} \in \{0,1\}^N$, and $b_{z_j, q_j}$ denotes the mode at position $z_j$ and output port $q_j$ of the beam splitter used in Alice and Bob's measurement.
This measurement strategy achieves a Fisher information (FI) of
\begin{equation} F = 2M\beta^2 N^2, \end{equation}
and yields an error probability of detecting the presence of cheaters that is upper bounded by
\begin{equation} P \leq \xi^M, \end{equation}
when detecting from $M$ copies of the state, after excluding certain special cases of cheaters' strategies using some overhead, where $\xi$ is a constant given in Sec.~\ref{SI:simultaneous} of the Supplemental Material.
\end{thm}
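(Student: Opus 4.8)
The plan is to establish the two claims of Theorem~\ref{thm:simultaneous} separately—the Fisher information $F = 2M\beta^2 N^2$ and the cheater-detection error bound $P \leq \xi^M$—since they involve different analyses, and then combine them by noting they arise from the \emph{same} measurement data. First I would treat the Fisher information. Because Alice and Bob apply $R = U$ before photon counting, the combined channel $U$ (at Charlie) followed by $R$ is what maps the input $\ket{\psi}$ to the measured outcome distribution. I would compute the probabilities $\Pr(\vec{q}, \vec{z} \mid y)$ of the projective outcomes $\ket{\vec{q}, \vec{z}}$ as functions of $y$, using the explicit output state $\ket{\phi}$ in Eq.~\ref{eq:phi} with the Gaussian pulse $\tilde\psi(k) \propto \exp(-k^2/2\beta^2)$. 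The classical FI is then $F = \sum_{\vec{q},\vec{z}} \frac{1}{\Pr(\vec{q},\vec{z}\mid y)}\left(\partial_y \Pr(\vec{q},\vec{z}\mid y)\right)^2$; the key physical input is that the $y$-dependence enters only through the phase/displacement $\sum_i x_i \mp cNt + NL/2$ in the pulse shapes, so differentiating in $y$ pulls down a factor scaling like $N$, and squaring gives the $N^2$ scaling, with the Gaussian second moment contributing the $\beta^2$. Summing over $M$ independent copies multiplies the single-copy FI by $M$, yielding $F = 2M\beta^2 N^2$. I would verify that this photon-counting FI matches (up to the stated constant) the QFI $2N(N+1)\beta^2$ of Theorem~\ref{thm:QFI_theorem} in the large-$N$ limit, confirming the measurement is asymptotically optimal for estimation.

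Next I would handle the cheater-detection error. Here the two-state ensemble $\{\ket{\psi_1}, \ket{\psi_2}\}$ with equal probabilities is specified, and Theorem~\ref{thm:optimal_gamma} already gives the structural bound $P = \max\{P_1, P_2\}$. The goal is to show that with this explicit measurement the per-copy error is bounded by some constant $\xi < 1$, after the stated ``overhead'' used to exclude special cheater strategies. I would proceed by computing, for a cheater's best approximation state $\ket{\gamma}$, the outcome distribution of the same photon-counting measurement and comparing it to the honest distribution from $\ket{\phi}$ at the true position. The discrimination error for a single copy is governed by the overlap $|\bra{\phi}\ket{\gamma}|$ (or the trace distance between the induced outcome distributions), and I would invoke the bounds from Theorem~\ref{thm:optimal_gamma}—in particular $P_1 \leq \sum_i p_i \max\{|\psi_{il}|^2, |\psi_{ir}|^2\}$ and $P_2 \leq \sqrt{\sum_{ij} p_i p_j |\bra{\psi_i}\ket{\psi_j}|}$—specialized to the two equiprobable states. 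Identifying $\xi$ as the larger of the relevant single-copy error constants, and then arguing that across $M$ independent rounds the errors multiply (since each round is an independent hypothesis test and the measurement is identical), gives $P \leq \xi^M$.

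The ``overhead'' for excluding special cheater strategies deserves care: certain cheater configurations (e.g.\ the degenerate reflection case of Proposition~\ref{prop:reflection}, or strategies exploiting the structure of a particular $\vec{q}$ outcome) can reproduce parts of the honest distribution exactly, so the naive single-copy distinguishability could vanish on a measure-zero or low-probability set of outcomes. I would handle these by conditioning the analysis on the generic outcomes where the honest and cheating distributions provably differ, and spending a constant number of extra copies (the overhead) to rule out the exceptional strategies, so that the effective per-copy error constant $\xi$ remains strictly below one on the dominant event. Finally, I would emphasize that both results are extracted from the \emph{identical} photon-counting record, so no measurement trade-off is incurred—the same data simultaneously yields the position estimate with FI $2M\beta^2 N^2$ and the cheater-detection decision with error $\leq \xi^M$.

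\textbf{Main obstacle.} The hardest part will be establishing the constant bound $\xi < 1$ rigorously rather than the estimation FI, because it requires optimizing over \emph{all} admissible cheater strategies $V, W, P, Q$ within the photon-counting measurement (not just the information-theoretic optimum of Theorem~\ref{thm:optimal_gamma}) and then carefully isolating and discharging the special cases via the overhead argument. Controlling the $N$-fold product structure of $\ket{\gamma}$ and showing the induced outcome distributions remain distinguishable by a constant margin uniformly in the cheater's choices—especially ensuring the exceptional strategies genuinely form a restricted set that the overhead can eliminate—is where the real technical work lies.
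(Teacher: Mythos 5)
Your Fisher-information outline matches the paper's: compute the outcome distribution of the photon-counting measurement after the extra beam splitter $R=U$, differentiate in $y$, and extract the $N^2$ scaling (the paper additionally exploits that at $y=0$ only the all-zeros and all-ones outcomes $\vec{q}_0,\vec{q}_1$ have nonvanishing probability, which is what makes the integral tractable). The security half, however, has a genuine gap. You propose to import the bounds $P_1,P_2$ from Theorem~\ref{thm:optimal_gamma} and then let per-round errors multiply. But those bounds are derived for the measurement $\{M_0=\ket{\phi_i}\bra{\phi_i},\,M_1=I-\ket{\phi_i}\bra{\phi_i}\}$, i.e.\ a projection onto the honest output state, which is \emph{not} the photon-counting measurement of this theorem; the entire point of Theorem~\ref{thm:simultaneous} is to show a fixed, practical measurement suffices, so the security argument must be redone for that measurement. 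The paper does this with a two-step decision rule you do not reproduce: (1) for cheaters who operate on the genuine states, detection works through \emph{photon loss into the ancillary modes} --- the leakage into the $b$ modes makes the probability of registering all $N$ photons at most $\max\{|\psi_l|^2,|\psi_r|^2\}$ per copy, giving $\max\{|\psi_l|^{2M},|\psi_r|^{2M}\}$, with the permutation/phase cases of Eq.~\ref{SI_eq:special_U'} excluded by an auxiliary no-beam-splitter measurement (the ``overhead''). This mechanism is absent from your proposal.

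(2) For cheaters who discard the inputs and send a fixed state $\rho_0$, your claim that ``across $M$ independent rounds the errors multiply'' does not hold: a single copy gives essentially no ability to distinguish $\rho_0$ from the honest ensemble under the fixed photon-counting POVM, so there is no per-copy error constant $\xi<1$ to exponentiate. The paper instead detects this attack \emph{statistically}: it builds kernel-density estimates $\hat{P}_{1,2}$ of the empirical outcome distributions conditioned on the two random input choices, compares them to the honest product distribution $Q_1\otimes Q_2$ in total variation distance, and uses McDiarmid's inequality (Lemma~1) plus TV subadditivity/superadditivity for products (Lemma~2) to get $2\exp(-M\xi_1)$ for the miss probability and $2\exp(-M\xi_2)$ for the false alarm. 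The exponential-in-$M$ form thus comes from a concentration inequality on an aggregate statistic, not from independent per-round hypothesis tests. Without some substitute for this distribution-comparison machinery, your argument cannot close the ``discard and replace'' case.
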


The proof is given in Sec.~\ref{SI:simultaneous} of the Supplemental Material.
Note that the FI achieved using this measurement strategy—whose inverse lower bounds the variance of estimating $y$ \cite{kay1993fundamentals}—is slightly worse than the QFI predicted in Theorem~\ref{thm:QFI_theorem}, but it still achieves Heisenberg scaling over $N$, demonstrating the advantage of using entangled states by Alice and Bob.
Note that the strategy used in Theorem~\ref{thm:simultaneous} simply requires a balanced beam splitter to interfere the light at Alice and Bob before they measure the single photon at all positions. The simple structure of such a measurement strategy should be of more practical interest.

This calculation also highlights a key limitation in standard QPV discussion: most existing works focus on qubit-based protocols, neglecting the pulse shape of the transmitted states and implicitly assuming infinite precision in verifying the prover’s position \cite{beausoleil2006tagging, malaney2010quantum, malaney2010location, kent2011quantum, buhrman2014position, vaidman2003instantaneous, beigi2011simplified, bluhm2022single, asadi2025linear, gonzales2019bounds, speelman2016instantaneous, chakraborty2015practical}. In practice, however, the states have finite temporal profiles, which fundamentally limit the achievable spatial resolution. Sharper pulses yield better precision, as a larger $\beta$ leads to higher FI/QFI $F$, and achieving infinite precision would require infinitely sharp pulses—an unphysical idealization. Interestingly, many QPV protocols produce highly entangled states, which could potentially enhance position verification accuracy. Yet, this metrological advantage has largely been overlooked. The results presented above represent a first attempt to address this overlooked aspect of QPV.

\textit{Conclusion and discussion} - In this work, we bridge the study of quantum ranging with QPV, resulting in a scheme that can estimate the position of a prover while simultaneously detecting the potential presence of cheaters. Our scheme relies on entangled states in the frequency domain and requires only a beam splitter at the prover’s location. It is secure against cheaters who possess a single ancillary mode, have access to ideal quantum memories, and are capable of performing arbitrary two-port beam splitter operations—without the use of entanglement.
This work introduces a new aspect to the study of quantum ranging—namely, the security associated with detecting cheaters who attempt to spoof the prover’s position. It also contributes a novel perspective to QPV by providing the first quantitative treatment of the precision in position verification, an aspect that is typically overlooked in existing QPV protocols.

This work initiates a new line of inquiry at the intersection of quantum ranging and QPV, opening up a broad range of potential directions for future exploration.
We rely on single-mode states from Alice and Bob and require only a simple beam splitter operation from Charlie, which enhances the scheme's practicality. However, it is natural to ask whether more sophisticated constructions—such as multimode states or operations involving nonlinear optical devices—could further improve the scheme’s security or precision.
Another interesting direction is to explore whether entanglement between the probe states and an idler mode, kept locally at Alice and Bob, could enhance performance, similar to ideas developed in the context of quantum illumination~\cite{tan2008quantum,lloyd2008enhanced,barzanjeh2015microwave,nair2020fundamental,sanz2017quantum,barzanjeh2020microwave,gregory2020imaging,karsa2024quantum,shapiro2020quantum}. Such entanglement might eliminate the need for randomness in the states sent by Alice and Bob, and could also offer additional robustness against noise and loss.

\textit{Acknowledgements} - We thank Yujie Zhang, Peixue Wu, Debbie Leung for helpful discussions. AM and YW acknowledge the support of the Natural Sciences and Engineering Research Council of Canada (NSERC); this work was supported by an NSERC-UKRI Alliance grant (ALLRP 597823-24).  GS and YW acknowledge funding from the Canada First Research Excellence Fund.

\bibliography{main}
\newpage
\onecolumngrid
\appendix

\section{Precision of quantum ranging}\label{SI:precision}

In this section, we detail the quantification of the variance in estimating Charlie's position based on quantum estimation theory \cite{braunstein1994statistical,paris2009quantum,kay1993fundamentals,sidhu2020geometric}.
To ensure proper normalization, we choose the state as follows
\begin{equation}\begin{aligned}
&\ket{\psi}=\frac{\psi_l}{\sqrt{N!}}\int d\vec{k}     \Tilde{\psi}(\vec{k})\prod_{i=1}^Na^\dagger_{k_i,l}\ket{0}+\frac{\psi_r}{\sqrt{N!}}\int d\vec{k}     \Tilde{\psi}(\vec{k})\prod_{i=1}^Na^\dagger_{k_i,r}\ket{0},\\
&\tilde{\psi}(\vec{k})=\int \tilde{\psi}(k)\frac{\exp(-\sum_{i=1}^N(k_i-k)^2/(2\sigma^2))}{(\pi\sigma^2)^{N/4}}dk,
\end{aligned}
\end{equation}
where, $\vec{k} = [k_1, k_2, \ldots, k_N]$, and the subscripts $l$ and $r$ denote the modes on the left (Alice's side) and right (Bob's side) respectively, $|\psi_l|^2+|\psi_r|^2=1$. The function $\Tilde{\psi}(\vec{k})$, which represents the pulse shape in frequency space, can be freely chosen. This state is essentially the same NOON state in frequency space as discussed in the main text, but with additional factors introduced in $\psi(\vec{k})$ to ensure proper normalization. We will eventually take the limit $\sigma \to 0$.

We convert the state to the spatial coordinate representation using $a_{k,l}=\frac{1}{\sqrt{2\pi}}\int dx e^{ik(x+\frac{L}{2}-ct)}a_{x,l}$ and $a_{k,r}=\frac{1}{\sqrt{2\pi}}\int dx e^{ik(x-\frac{L}{2}+ct)}a_{x,r}$, which gives
\begin{equation}\label{eq:SM_psi}
\begin{aligned}
\ket{\psi}&=\frac{\psi_l}{\sqrt{N!}}\int d\vec{x}     \psi\left(\sum_i x_i-cNt+\frac{NL}{2}\right)\left(\frac{\sigma^2}{\pi}\right)^{N/4}\exp\left[-\frac{\sigma^2}{2}\sum_{i=1}^N(x_i-ct+\frac{L}{2})^2\right]\prod_{i=1}^Na_{x_i,l}^\dagger\ket{0}\\
&+\frac{\psi_r}{\sqrt{N!}}\int d\vec{x}     \psi\left(\sum_i x_i+cNt-\frac{NL}{2}\right)\left(\frac{\sigma^2}{\pi}\right)^{N/4}\exp\left[-\frac{\sigma^2}{2}\sum_{i=1}^N(x_i+ct-\frac{L}{2})^2\right]\prod_{i=1}^Na_{x_i,r}^\dagger\ket{0},\\
&\psi(x)=\int dk\tilde{\psi}(k)\exp(-ikx),
\end{aligned}
\end{equation}
where, $d\vec{x} = dx_1,dx_2,\cdots,dx_N$, and the effect of pulse propagation over time has been taken into account.

Starting from the initial state $\ket{\psi}$, we consider the terms on the left from Alice at time $\tau$ and positions $\vec{x}$
\begin{equation}
\begin{aligned}
&\frac{\psi_l}{\sqrt{N!}}  \psi\left(\sum_i x_i-cN\tau+\frac{NL}{2}\right)\left(\frac{\sigma^2}{\pi}\right)^{N/4}\exp\left[-\frac{\sigma^2}{2}\sum_{i=1}^N(x_i-c\tau+\frac{L}{2})^2\right]\prod_{i=1}^Na_{x_i,l}^\dagger\ket{0}\\
&\rightarrow \frac{\psi_l}{\sqrt{N!}}  \psi\left(\sum_i x_i-cN\tau+\frac{NL}{2}\right)\left(\frac{\sigma^2}{\pi}\right)^{N/4}\exp\left[-\frac{\sigma^2}{2}\sum_{i=1}^N(x_i-c\tau+\frac{L}{2})^2\right]\prod_{i=1}^Na_{x_i+c(t-\tau),l}^\dagger\ket{0}.
\end{aligned}
\end{equation}
Assume that at time $\tau$, all positions $x_k < y$, so each mode will reach Charlie at position $y$ at time $T_k = \tau + (y - x_k)/c$. If  the $k$-th mode is the first to arrive at $y$
\begin{equation}
\begin{aligned}
\rightarrow\frac{\psi_l}{\sqrt{N!}}  \psi\left(\sum_i x_i-cN\tau+\frac{NL}{2}\right)\left(\frac{\sigma^2}{\pi}\right)^{N/4}\exp\left[-\frac{\sigma^2}{2}\sum_{i=1}^N(x_i-c\tau+\frac{L}{2})^2\right]a_{y,l}^\dagger\prod_{i\neq k}a_{x_i+c(T_k-\tau),l}^\dagger\ket{0}.
\end{aligned}
\end{equation}
Charlie applies the beam splitter operation, which transforms the creation operator as
$a_{y,l}^\dagger \rightarrow U_{00} a_{y,l}^\dagger + U_{10} a_{y,r}^\dagger$.
\begin{equation}
\begin{aligned}
\rightarrow\frac{\psi_l}{\sqrt{N!}}  \psi\left(\sum_i x_i-cN\tau+\frac{NL}{2}\right)\left(\frac{\sigma^2}{\pi}\right)^{N/4}\exp\left[-\frac{\sigma^2}{2}\sum_{i=1}^N(x_i-c\tau+\frac{L}{2})^2\right](U_{00}a_{y,l}^\dagger+U_{10}a_{y,r}^\dagger)\prod_{i\neq k}a_{x_i+c(T_k-\tau),l}^\dagger\ket{0}.
\end{aligned}
\end{equation}
The mode then continues to propagate as
\begin{equation}
\begin{aligned}
\rightarrow\frac{\psi_l}{\sqrt{N!}}  \psi\left(\sum_i x_i-cN\tau+\frac{NL}{2}\right)\left(\frac{\sigma^2}{\pi}\right)^{N/4}\exp\left[-\frac{\sigma^2}{2}\sum_{i=1}^N(x_i-c\tau+\frac{L}{2})^2\right](U_{00}a_{y+c(t-T_k),l}^\dagger+U_{10}a_{y-c(t-T_k),r}^\dagger)\prod_{i\neq k}a_{x_i+c(t-\tau),l}^\dagger\ket{0}.
\end{aligned}
\end{equation}
Similarly, beam splitter operations are applied to all modes in the same manner.
\begin{equation}
\begin{aligned}
\rightarrow\frac{\psi_l}{\sqrt{N!}}  \psi\left(\sum_i x_i-cN\tau+\frac{NL}{2}\right)\left(\frac{\sigma^2}{\pi}\right)^{N/4}\exp\left[-\frac{\sigma^2}{2}\sum_{i=1}^N(x_i-c\tau+\frac{L}{2})^2\right]\prod_{k=1}^N(U_{00}a_{y+c(t-T_k),l}^\dagger+U_{10}a_{y-c(t-T_k),r}^\dagger)\ket{0}.
\end{aligned}
\end{equation}
Similarly, the evolution of the terms originating from Bob's side can be determined in the same way
\begin{equation}
\begin{aligned}
&\frac{\psi_r}{\sqrt{N!}}\int d\vec{x}     \psi\left(\sum_i x_i+cN\tau-\frac{NL}{2}\right)\left(\frac{\sigma^2}{\pi}\right)^{N/4}\exp\left[-\frac{\sigma^2}{2}\sum_{i=1}^N(x_i+c\tau-\frac{L}{2})^2\right]\prod_{i=1}^Na_{x_i,r}^\dagger\ket{0}\\
&\rightarrow\frac{\psi_r}{\sqrt{N!}}\int d\vec{x}     \psi\left(\sum_i x_i+cN\tau-\frac{NL}{2}\right)\left(\frac{\sigma^2}{\pi}\right)^{N/4}\exp\left[-\frac{\sigma^2}{2}\sum_{i=1}^N(x_i+c\tau-\frac{L}{2})^2\right]\prod_{k=1}^N(U_{01}a_{y+c(t-T_k'),l}^\dagger+U_{11}a_{y-c(t-T_k'),r}^\dagger)\ket{0},
\end{aligned}
\end{equation}
where $T_k'=\tau+(x_k-y)/c$. After substituting the expressions for $T_k$ and $T_k'$ and redefining the coordinates $x_i$ by shifting them by a constant, we obtain the resulting states after Charlie's beam splitter operations
\begin{equation}
\begin{aligned}\label{eq:phi_SI}
\ket{\phi}&=\frac{\psi_l}{\sqrt{N!}}\int d\vec{x}     \psi\left(\sum_i x_i-cNt+\frac{NL}{2}\right)\left(\frac{\sigma^2}{\pi}\right)^{N/4} \exp\left[-\frac{\sigma^2}{2}\sum_{i=1}^N(x_i-ct+\frac{L}{2})^2\right]\prod_{i=1}^N(U_{00}a_{x_i,l}^\dagger+U_{10}a_{2y-x_i,r}^\dagger)\ket{0}\\
&+\frac{\psi_r}{\sqrt{N!}}\int d\vec{x}     \psi\left(\sum_i x_i+cNt-\frac{NL}{2}\right)\left(\frac{\sigma^2}{\pi}\right)^{N/4}\exp\left[-\frac{\sigma^2}{2}\sum_{i=1}^N(x_i+ct-\frac{L}{2})^2\right]\prod_{i=1}^N(U_{01}a_{2y-x_i,l}^\dagger+U_{11}a_{x_i,r}^\dagger)\ket{0}.\\
\end{aligned}
\end{equation}

We now calculate the quantum Fisher information (QFI) for estimating Charlie’s position $y$ based on the relation between QFI and fidelity, $F=\frac{8}{\delta y^2}[1-|\bra{\phi_y}\ket{\phi_{y+\delta y}}|]$ \cite{braunstein1994statistical,paris2009quantum,kay1993fundamentals,sidhu2020geometric}
\begin{equation}
\begin{aligned}
\bra{\phi_y}\ket{\phi_{y+\delta y}}=\sum_{\vec{i}}&\bigg[|\psi_l|^2(\prod_{j=1}^N|U_{i_j,0}|^2)\delta_{11}(\vec{i},\delta y)+\psi_l^*\psi_r(\prod_{j=1}^N U_{i_j,0}^*U_{i_j,1})\delta_{12}(\vec{i},\delta y)\\
&+\psi_l\psi_r^*(\prod_{j=1}^N U_{i_j,1}^*U_{i_j,0})\delta_{21}(\vec{i},\delta y)+|\psi_r|^2(\prod_{j=1}^N|U_{i_j,1}|^2)\delta_{22}(\vec{i},\delta y)\bigg],\\
\end{aligned}
\end{equation}
\begin{equation}
\begin{aligned}
&\delta_{pq}(\vec{i},\delta y)=\int d\vec{z}f_p^*(y) f_q(y+\delta y),\\
&f_1(y)=\psi\left(Ny+\sum_{j=1}^N(-1)^{i_j}(z_j-y)+\frac{NL}{2}-cNt\right)\left(\frac{\sigma^2}{\pi}\right)^{N/4}\exp(-\frac{\sigma^2}{2}\sum_{j=1}^N\left((-1)^{i_j}(z_j-y)+y+\frac{L}{2}-ct\right)^2),\\
&f_2(y)=\psi\left(Ny-\sum_{j=1}^N(-1)^{i_j}(z_j-y)-\frac{NL}{2}+cNt\right)\left(\frac{\sigma^2}{\pi}\right)^{N/4}\exp(-\frac{\sigma^2}{2}\sum_{j=1}^N\left(-(-1)^{i_j}(z_j-y)+y-\frac{L}{2}+ct\right)^2),\\
\end{aligned}
\end{equation}
where we set $l = 0$ and $r = 1$, and will use $l, r$ and $0, 1$ interchangeably in the following, $\sum_{\vec{i}}=\sum_{i_1=l,r}\sum_{i_2=l,r}\cdots\sum_{i_N=l,r}$. After performing the integration and taking the limit $\sigma \to 0$, we obtain
\begin{equation}
\begin{aligned}
&\delta_{11}=\int dr \sqrt{\frac{\sigma^2}{N\pi}}\psi^*(r)\psi(r+2Q\delta y),\\
&\delta_{12}=\int dr \sqrt{\frac{\sigma^2}{N\pi}}\psi^*(r)\psi(-r+2Ny+(2N-2Q)\delta y),\\
&\delta_{21}=\int dr \sqrt{\frac{\sigma^2}{N\pi}}\psi^*(r)\psi(-r+2Ny+2Q\delta y),\\
&\delta_{22}=\int dr \sqrt{\frac{\sigma^2}{N\pi}}\psi^*(r)\psi(r+(2N-2Q)\delta y),\\
\end{aligned}
\end{equation}
where $Q = \sum_{k=1}^N \delta_{i_k,1}$ denotes the number of indices $k$ such that $i_k = 1$.
If we choose 
\begin{equation}\label{eq:psi_k}
\tilde{\psi}(k)=\frac{N^{1/4}}{(4\pi \sigma^2)^{1/4}}\exp(-k^2/2\beta^2)/(\pi\beta^2)^{1/4} .
\end{equation}
\begin{equation}
\begin{aligned}
&\delta_{11}(Q,\delta y)=\exp(-\beta^2Q^2\delta y^2),\\
&\delta_{12}(Q,\delta y)=\exp(-\beta^2[\delta y(N-Q)+N y]^2),\\
&\delta_{21}(Q,\delta y)=\exp(-\beta^2[\delta yQ+N y]^2),\\
&\delta_{22}(Q,\delta y)=\exp(-\beta^2(N-Q)^2\delta y^2).\\
\end{aligned}
\end{equation}

We then have the QFI
\begin{equation}
\begin{aligned}
&F=\frac{8}{\delta y^2}[1-|\bra{\phi_y}\ket{\phi_{y+\delta y}}|]\\
&=-16\exp(-2\beta^2 N^2 y^2)N^2y^2\beta^4\left|\sum_{\vec{i}}\psi_l^*\psi_r (\prod_{j=1}^N U_{i_j,0}^* U_{i_j,1})(N-Q)+\psi_l\psi^*_r (\prod_{j=1}^N U_{i_j,1}^* U_{i_j,0})Q\right|^2\\
&+4\sum_{\vec{i}}\bigg[|\psi_l|^2(\prod_{j=1}^N |U_{i_j,0}|^2) 2\beta^2 Q^2+\psi_l^*\psi_r(\prod_{j=1}^N U_{i_j,0}^* U_{i_j,1})\exp(-\beta^2 N^2 y^2)\beta^2((N-Q)^2+Q^2)(1-2\beta^2 N^2 y^2)\\
&\quad\quad\quad+\psi_l\psi^*_r(\prod_{j=1}^N U_{i_j,1}^* U_{i_j,0})\exp(-\beta^2 N^2 y^2)\beta^2((N-Q)^2+Q^2)(1-2\beta^2 N^2 y^2)+|\psi_r|^2(\prod_{j=1}^N |U_{i_j,1}|^2) 2\beta^2 (N-Q)^2\bigg].
\end{aligned}
\end{equation}

If we choose 
\begin{equation}
\begin{aligned}
U=\frac{1}{\sqrt{2}}\left[\begin{matrix}
1 & 1\\
-1 & 1
\end{matrix}\right],
\end{aligned}
\end{equation}
\begin{equation}
\begin{aligned}
&F=-16\exp(-2\beta^2 N^2 y^2)N^2y^2\beta^4\left|\sum_{Q=0}^N C_N^Q[\psi_l^*\psi_r \frac{(-1)^Q}{2^N}(N-Q)+\psi_l\psi^*_r \frac{(-1)^Q}{2^N}Q]\right|^2\\
&+4\sum_{Q=0}^N\bigg[|\psi_l|^2\frac{1}{2^N} 2\beta^2 Q^2+\psi_l^*\psi_r\frac{(-1)^Q}{2^N}\exp(-\beta^2 N^2 y^2)\beta^2((N-Q)^2+Q^2)(1-2\beta^2 N^2 y^2)\\
&\quad\quad\quad+\psi_l\psi^*_r\frac{(-1)^Q}{2^N}\exp(-\beta^2 N^2 y^2)\beta^2((N-Q)^2+Q^2)(1-2\beta^2 N^2 y^2)+|\psi_r|^2\frac{1}{2^N} 2\beta^2 (N-Q)^2\bigg]\\
&=2\beta^2N(N+1),
\end{aligned}
\end{equation}
where we use the fact that $\sum_{Q=0}^N Q^2 C_N^Q=N(N+1)2^{N-2}$, $\sum_{Q=0}^N (-1)^Q Q C_N^Q=\sum_{Q=0}^N (-1)^Q Q^2 C_N^Q=0$.


\section{Security of the positioning}\label{SI:security}

\subsection{Derivation of the states prepared by cheaters}\label{SI:state_cheater}

In this subsection, we derive the states that can be prepared by the cheaters.
Alice and Bob still send the initial states as given in Eq.~\ref{eq:SM_psi}.
Assume that the two cheaters are positioned at $y_{1,2} = y' \mp d/2$, where $y' \neq y$ is the location at which they aim to impersonate a fake prover Charlie. Each cheater is allowed to perform local two-port beam splitter operations, meaning that an ancillary mode can be introduced at each location. In addition, they are assumed to have access to quantum memory, allowing them to delay the pulses arbitrarily without altering their shape.

We first consider the terms in $\ket{\psi}$ corresponding to the modes on the left, originating from Alice.
Assume that at time $\tau$, all positions $x_k < y' - d/2$ for the modes $a_{x_k}$. Then, each mode will reach the left cheater $C_l$ at time
\begin{equation}
T_k=\tau+\frac{y'-\frac{d}{2}-x_k}{c}.
\end{equation}
If the $k$-th mode reaches the left cheater's position $y_1 = y' - d/2$ first, then the left cheater $C_l$ applies a beam splitter operation
\begin{equation}
V=\left[\begin{matrix}
V_{00} & V_{01}\\
V_{10} & V_{11}
\end{matrix}\right].
\end{equation}
The terms on the left, originating from Alice, evolve as follows
\begin{equation}
\begin{aligned}
&\frac{\psi_l}{\sqrt{N!}}     \psi\left(\sum_i x_i-cN\tau+\frac{NL}{2}\right)\left(\frac{\sigma^2}{\pi}\right)^{N/4} \exp\left[-\frac{\sigma^2}{2}\sum_{i=1}^N(x_i-c\tau+\frac{L}{2})^2\right]\prod_{i=1}^Na_{x_i,l}^\dagger\ket{0}\\
&\rightarrow\frac{\psi_l}{\sqrt{N!}}     \psi\left(\sum_i x_i-cN\tau+\frac{NL}{2}\right)\left(\frac{\sigma^2}{\pi}\right)^{N/4} \exp\left[-\frac{\sigma^2}{2}\sum_{i=1}^N(x_i-c\tau+\frac{L}{2})^2\right]a_{y_1,l}^\dagger\prod_{i\neq k}a_{x_i+c(T_k-\tau),l}^\dagger\ket{0}\\
&\rightarrow\frac{\psi_l}{\sqrt{N!}}     \psi\left(\sum_i x_i-cN\tau+\frac{NL}{2}\right)\left(\frac{\sigma^2}{\pi}\right)^{N/4} \exp\left[-\frac{\sigma^2}{2}\sum_{i=1}^N(x_i-c\tau+\frac{L}{2})^2\right](V_{00}a_{y_1,l}^\dagger+V_{01}b_{y_1,l}^\dagger)\prod_{i\neq k}a_{x_i+c(T_k-\tau),l}^\dagger\ket{0},\\
\end{aligned}
\end{equation}
where $b_{y_1,l}$ denotes the ancillary mode introduced at the left cheater $C_l$. The cheater stores the $b_{y_1,l}$ mode in a quantum memory, while allowing the $a_{y_1,l}$ mode to continue propagating
\begin{equation}
\begin{aligned}
&\rightarrow\frac{\psi_l}{\sqrt{N!}}     \psi\left(\sum_i x_i-cN\tau+\frac{NL}{2}\right)\left(\frac{\sigma^2}{\pi}\right)^{N/4} \exp\left[-\frac{\sigma^2}{2}\sum_{i=1}^N(x_i-c\tau+\frac{L}{2})^2\right](V_{00}a_{y_1+c(t-T_k),l}^\dagger+V_{01}b_{y_1,l}^\dagger)\prod_{i\neq k}a_{x_i+c(t-\tau),l}^\dagger\ket{0}.\\
\end{aligned}
\end{equation}
Similarly, each mode passes through a beam splitter, and the left cheater $C_l$ always retains the $b_{y_1,l}$ mode in quantum memory. This results in the state
\begin{equation}
\begin{aligned}
&\rightarrow\frac{\psi_l}{\sqrt{N!}}     \psi\left(\sum_i x_i-cN\tau+\frac{NL}{2}\right)\left(\frac{\sigma^2}{\pi}\right)^{N/4} \exp\left[-\frac{\sigma^2}{2}\sum_{i=1}^N(x_i-c\tau+\frac{L}{2})^2\right]\prod_{k=1}^N(V_{00}a_{y_1+c(t-T_k),l}^\dagger+V_{01}b_{y_1,l}^\dagger)\ket{0},\\
\end{aligned}
\end{equation}
where we assume $t-T_k\leq d/c$.

Similarly, if the right cheater $C_r$ performs any two-port beam splitter operation with an ancillary mode
\begin{equation}
W=\left[\begin{matrix}
W_{00} & W_{01}\\
W_{10} & W_{11}
\end{matrix}\right],
\end{equation}
we can similarly derive the states for the terms on the right originating from Bob, which yields
\begin{equation}
\begin{aligned}
&\frac{\psi_r}{\sqrt{N!}}     \psi\left(\sum_i x_i+cN\tau-\frac{NL}{2}\right)\left(\frac{\sigma^2}{\pi}\right)^{N/4} \exp\left[-\frac{\sigma^2}{2}\sum_{i=1}^N(x_i+c\tau-\frac{L}{2})^2\right]\prod_{k=1}^N(W_{00}a_{y_2-c(t-T_k'),r}^\dagger+W_{01}b_{y_2,r}^\dagger)\ket{0},\\
\end{aligned}
\end{equation}
where $y_2=y'+d/2$,
\begin{equation}
T_k'=\tau+\frac{x_k-y'-\frac{d}{2}}{c} ,
\end{equation}
and we assume $t-T_k'\leq d/c$.

After the first round of operations by the two cheaters, they each retain the modes $b_{y_1,l}$ and $b_{y_2,r}$ locally for a duration of $d/c$, in order to align with the original pulse shape. At time $t$, if there is no further operation, the state evolves to
\begin{equation}
\begin{aligned}
\ket{\gamma'}&=\frac{\psi_l}{\sqrt{N!}}\int d\vec{x}     \psi\left(\sum_i x_i-cN\tau+\frac{NL}{2}\right)\left(\frac{\sigma^2}{\pi}\right)^{N/4} \exp\left[-\frac{\sigma^2}{2}\sum_{i=1}^N(x_i-c\tau+\frac{L}{2})^2\right]\\
&\quad\times\prod_{k=1}^N(V_{00}a_{x_k+c(t-\tau),l}^\dagger+V_{01}b_{2y'-x_k-c(t-\tau),l}^\dagger)\ket{0}\\
&+\frac{\psi_r}{\sqrt{N!}}\int d\vec{x}     \psi\left(2Ny-\sum_i x_i+cN\tau-\frac{NL}{2}\right)\left(\frac{\sigma^2}{\pi}\right)^{N/4} \exp\left[-\frac{\sigma^2}{2}\sum_{i=1}^N(2y'-x_i+c\tau-\frac{L}{2})^2\right]\\
&\quad\times\prod_{k=1}^N(W_{00}a_{2y'-x_k+c(\tau-t),r}^\dagger+W_{01}b_{x_k-c(\tau-t),r}^\dagger)\ket{0}.\\
\end{aligned}
\end{equation}
But the cheaters will implement another beam splitter at each side.
At this point, the modes $a_{z,l}$ and $b_{z,r}$ are located at the right cheater $C_r$, while the modes $b_{z,l}$ and $a_{z,r}$ are at the left cheater $C_l$. Assume the right cheater now performs a local beam splitter operation
\begin{equation}\begin{aligned}
&a_{x_k+c(t-\tau),l}^\dagger\rightarrow P_{00}a_{x_k+c(t-\tau),l}^\dagger+P_{01}b_{x_k+c(t-\tau),r}^\dagger\\
&b_{x_k+c(t-\tau),r}^\dagger\rightarrow P_{10}a_{x_k+c(t-\tau),l}^\dagger+P_{11}b_{x_k+c(t-\tau),r}^\dagger\\
\end{aligned}
\end{equation}
Similarly, the left cheater $C_l$ does a local beam splitter operation
\begin{equation}\begin{aligned}
&a_{2y'-x_k+c(\tau-t),r}^\dagger\rightarrow Q_{00}a_{2y'-x_k+c(\tau-t),r}^\dagger+Q_{01}b_{2y'-x_k+c(\tau-t),l}^\dagger\,\\
&b_{2y'-x_k+c(\tau-t),l}^\dagger\rightarrow Q_{10}a_{2y'-x_k+c(\tau-t),r}^\dagger+Q_{11}b_{2y'-x_k+c(\tau-t),l}^\dagger\,\\
\end{aligned}
\end{equation}
We then get the states prepared by the cheaters
\begin{equation}
\begin{aligned}
\ket{\gamma}&=\frac{\psi_l}{\sqrt{N!}}\int d\vec{x}     \psi\left(\sum_i x_i-cNt+\frac{NL}{2}\right)\left(\frac{\sigma^2}{\pi}\right)^{N/4} \exp\left[-\frac{\sigma^2}{2}\sum_{i=1}^N(x_i-ct+\frac{L}{2})^2\right]\\
&\quad\times\prod_{k=1}^N(V_{00}P_{00}a_{x_k,l}^\dagger+V_{00}P_{01}b_{x_k,r}^\dagger+V_{01}Q_{10}a_{2y'-x_k,r}^\dagger+V_{01}Q_{11}b_{2y'-x_k,l}^\dagger)\ket{0}\\
&+\frac{\psi_r}{\sqrt{N!}}\int d\vec{x}     \psi\left(2Ny-\sum_i x_i+cNt-\frac{NL}{2}\right)\left(\frac{\sigma^2}{\pi}\right)^{N/4} \exp\left[-\frac{\sigma^2}{2}\sum_{i=1}^N(2y-x_i+ct-\frac{L}{2})^2\right]\\
&\quad\times\prod_{k=1}^N(W_{01}P_{10}a_{x_k,l}^\dagger+W_{01}P_{11}b_{x_k,r}^\dagger+W_{00}Q_{00}a_{2y'-x_k,r}^\dagger+W_{00}Q_{01}b_{2y'-x_k,l}^\dagger)\ket{0}.\\
\end{aligned}
\end{equation}

When Alice and Bob measure the state $\ket{\gamma_{y'}}$, they obtain an estimated position $y''$, which the cheaters aim to match with their chosen fake location $y' \neq y$. For generality, we simply assume $y'' \neq y$, which may or may not equal $y'$. To compare this with the state prepared by a genuine operation from Charlie, as given in Eq.~\ref{eq:phi}, we compute the inner product between the two states as
\begin{equation}\label{SI_eq:phi_gamma}
\begin{aligned}
\bra{\phi_{y''}}\ket{\gamma_{y'}}=\sum_{\vec{i}}&\bigg[|\psi_l|^2(\prod_{j=1}^NU_{i_j,0}^*U_{i_j,0}')\delta_{11}+\psi_l^*\psi_r(\prod_{j=1}^N U_{i_j,0}^*U_{i_j,1}')\delta_{12}\\
&+\psi_l\psi_r^*(\prod_{j=1}^N U_{i_j,1}^*U_{i_j,0}')\delta_{21}+|\psi_r|^2(\prod_{j=1}^N U_{i_j,1}^*U_{i_j,1}')\delta_{22}\bigg],\\
\end{aligned}
\end{equation}
\begin{equation}
\begin{aligned}
&\delta_{pq}(\vec{i},\delta y)=\int d\vec{z}f_p^*(y'')f_q(y'),\\
&f_1(y)=\psi\left(Ny+\sum_{j=1}^N(-1)^{i_j}(z_j-y)+\frac{NL}{2}-cNt\right)\left(\frac{\sigma^2}{\pi}\right)^{N/4}\exp(-\frac{\sigma^2}{2}\sum_{j=1}^N\left((-1)^{i_j}(z_j-y)+y+\frac{L}{2}-ct\right)^2),\\
&f_2(y)=\psi\left(Ny-\sum_{j=1}^N(-1)^{i_j}(z_j-y)-\frac{NL}{2}+cNt\right)\left(\frac{\sigma^2}{\pi}\right)^{N/4}\exp(-\frac{\sigma^2}{2}\sum_{j=1}^N\left(-(-1)^{i_j}(z_j-y)+y-\frac{L}{2}+ct\right)^2),\\
\end{aligned}
\end{equation}
where $U_{00}'=V_{00}P_{00}$, $U_{10}'=V_{01}Q_{10}$, $U_{01}'=W_{01}P_{10}$, $U_{11}'=W_{00}Q_{00}$.  Using the choice of $\tilde{\psi}(k)=\frac{N^{1/4}}{(4\pi \sigma^2)^{1/4}}\exp(-k^2/2\beta^2)/(\pi\beta^2)^{1/4} $, we can find
\begin{equation}
\begin{aligned}
&\delta_{11}=\exp(-\beta^2Q^2\Delta y^2),\quad \delta_{12}=\exp(-\beta^2[\Delta y(N-Q)+Ny'']^2),\\
&\delta_{21}=\exp(-\beta^2[\Delta yQ+Ny'']^2),\quad \delta_{22}=\exp(-\beta^2(N-Q)^2\Delta y^2),
\end{aligned}
\end{equation}
where $\Delta y=y'-y''$.

\subsection{Proof of Proposition~\ref{prop:reflection} and \ref{prop:single_side}}\label{proof_prop:reflection}
We begin by proving Proposition~\ref{prop:reflection}. Given that $U_{01} = U_{10} = 1$ and $U_{00} = U_{11} = 0$, we can choose the parameters as $V_{01} = Q_{10} = W_{01} = P_{10} = 1$ and $V_{00} = P_{00} = W_{00} = Q_{00} = 0$. This choice reproduces $U_{pq} = U_{pq}'$ exactly. As a result,   the cheaters have successfully prepared the state $\ket{\gamma_{y'}} = \ket{\phi_{y'}}$. Since $\ket{\gamma_{y'}}$ is exactly the same as $\ket{\phi_{y'}}$, Alice and Bob will always obtain $y'' = y'$ in this case.

We now prove Proposition~\ref{prop:single_side}. Suppose the cheaters only prepare the state from one side, with fixed inputs $\psi_l = 1$ and $\psi_r = 0$. In this case, the cheaters can trivially reproduce the same state that Charlie would have prepared, after discarding the original state, thereby spoofing his position.
Moreover, even if the cheaters do not actively prepare the initial states, they can still perform the same operations as Charlie. Specifically, we can always choose the parameters $P_{00} = Q_{10} = 1$, $V_{00} = U_{00}$, and $V_{01} = U_{10}$ to achieve $\ket{\gamma_{y'}}=\ket{\phi_{y'}}$ (Alice and Bob also always obtain $y''=y'$ in this case). This configuration corresponds to the scenario in which the light is emitted solely from Alice’s side.
In such a case, the left-side cheater can faithfully replicate Charlie’s operations. Furthermore, the cheater can store the light locally in a quantum memory to synchronize its output with the expected timing of Charlie’s operations. In this way, the cheaters can prepare a state indistinguishable from the honest one.
This example highlights the necessity of sending light from both Alice and Bob’s sides in order to maintain the security of the positioning protocol.

\subsection{Proof of Theorem \ref{thm:optimal_gamma}}\label{proof_thm:optimal_gamma}


To upper-bound the error probability $P_1$, corresponding to the case where the cheaters perform operations on the original states, we make the following assumptions. First, we assume that the cheaters choose the optimal operations. Second, we assume the Alice and Bob perform a specific (possibly suboptimal) measurement.
On the detection side, Alice and Bob use a POVM of the form $\{M_0 = \ket{\phi_i}\bra{\phi_i}, M_1 = I - \ket{\phi_i}\bra{\phi_i}\}$ when they send the state $\ket{\phi_i}$. Note that the state $\ket{\phi_i}$ depends on their estimated location $y''$, which must be determined before any detection of potential cheating can occur. We assume that Alice and Bob have access to a sufficiently large number of samples to estimate $y''$, and that in the absence of cheating, $y'' = y$.
The decision strategy is as follows: Alice and Bob conclude that no cheater is present upon observing outcome $M_0$, and conclude that cheating has occurred upon observing outcome $M_1$. Clearly, if no cheaters are present, the outcome will always be $M_0$, and thus no false alarms will occur. 
\begin{equation}
P_1(\text{cheater}|\text{honest})=0.
\end{equation}
The error probability of mistakenly accepting a cheated state as honest—i.e., the probability of outcome $M_0$ when the cheaters are present—is given by
\begin{align}
P_1(\text{honest}|\text{cheater})= \sum_i p_i |\bra{\phi_{iy''}}\ket{\gamma_{iy'}}|^2 ,
\end{align}
where, to keep the analysis general, we allow $y''$—the position estimated by Alice and Bob—to differ from the position $y'$ chosen by the cheaters.
Observe that the vectors $u_0 = [U_{00}, U_{10}]^T$ and $u_1 = [U_{01}, U_{11}]^T$ are orthonormal, as $U$ is a unitary matrix. The same applies to the vectors constructed from $V_{ij}, W_{ij}, P_{ij}$, and $Q_{ij}$ individually. However, the modified vector $u_0' = [U_{00}', U_{10}']^T$ incorporates components from both $v_0 = [V_{00}, V_{01}]^T$ and the additional coefficients $P_{00}$ and $Q_{10}$. As a result, it is generally not possible to simultaneously satisfy $u_0' = u_0$ and $u_1' = u_1$.
The quantities $\sum_{i=0,1} U_{i,p}^* U_{i,q}' = u_p^\dagger u_q'$ represent inner products between two vectors, where one of the vectors—$u_q'$—typically has norm less than one. As long as all inner products $u_p^\dagger u_q'$ for $p, q = 0, 1$ are strictly less than one, increasing $N$ sufficiently will cause the product $\prod_{j=1}^N u_{p_j}^\dagger u_{q_j}'$ to decay exponentially toward zero. 
Nevertheless, if the cheaters configure their operations so that one inner product equals one—e.g., by choosing $P_{00} = Q_{10} = 1$, $P_{10} = Q_{00} = 0$, and setting $V_{00} = U_{00}$, $V_{01} = U_{10}$—they can always approximate half of the operations perfectly. This implies 
\begin{equation}\label{eq:U'_half}
U'=\left[\begin{matrix}
U_{00} & 0\\
U_{10} & 0
\end{matrix}\right].
\end{equation}
More rigorous proof can be given by observing the factor from Eq. \ref{SI_eq:phi_gamma} is bounded above by
\begin{equation}\begin{aligned}\label{SI_eq:U_U_delta}
&\sum_{\vec{i}}\prod_{j=1}^NU_{i_j,p}^*U_{i_j,q}'\delta_{p+1,q+1}=\sum_{Q=0}^NC_N^Qa^{N-Q}b^Q\delta_{p+1,q+1} \\
&\leq\sum_{Q=0}^NC_N^Q|a|^{N-Q}|b|^Q\delta_{p+1,q+1}\leq \sum_{Q=0}^NC_N^Q|a|^{N-Q}|b|^Q=(|a|+|b|)^N\leq|u_q'|^{N},\\
\end{aligned}
\end{equation}
\begin{equation}\begin{aligned}\label{SI_eq:ab_u}
&(|a|+|b|)^2=(|U_{0p}||U'_{0q}|+|U_{1p}||U'_{1q}|)^2\leq(|U_{0p}|^2+|U_{1p}|^2)(|U'_{0q}|^2+|U'_{1q}|^2)=|u_q'|^2,
\end{aligned}
\end{equation}
where $a=U^*_{0p}U'_{0q}$, $b=U^*_{1p}U'_{1q}$, where in Eq. \ref{SI_eq:ab_u}, we used the Cauchy–Schwarz inequality, and the equality holds when $u_p$ is parallel to $u_q'$. We can make $|u_0'|=|u_1'|$ only when $U'$ takes the following two form.
\begin{equation}
U'=\left[\begin{matrix}
1 & 0\\
0 & e^{i\theta}
\end{matrix}\right]\quad \text{or}\quad \left[\begin{matrix}
0 & e^{i\alpha}\\
e^{i\beta} & 0
\end{matrix}\right].
\end{equation}
But in these case, we cannot have $u_p$ parallel to $u_q'$ if $U$ is chosen as a general unitary matrix, which means equality cannot be hold. Whenever, $|u_q'|<1$, we have $|u_q'|^N\rightarrow0$. As motivated above, we can make one of the $|u_q'|=1$ by approximating half of the operations perfectly. 
And in this case, we have
\begin{equation}\label{SI_eq:product_phi_gamma}
|\bra{\phi_{iy''}}\ket{\gamma_{iy'}}|\leq \max\{|\psi_{il}|^2 , |\psi_{ir}|^2 \},
\end{equation}
which is independent of $N$.  So, the error probability is upper bounded by
\begin{equation}
P_1(\text{honest}|\text{cheater})\leq \sum_ip_i\max\{|\psi_{il}|^2,|\psi_{ir}|^2\}.
\end{equation}
And the total error probability for the case in which the cheaters directly act on the original state is given by
\begin{equation}
P_1=P_1(\text{honest}|\text{cheater})+P_1(\text{cheater}|\text{honest})\leq \sum_ip_i\max\{|\psi_{il}|^2,|\psi_{ir}|^2\},
\end{equation}
which completes the proof.

We now derive a bound on the total error probability $P_2$ in the case where the cheaters discard the original state and instead prepare their own state $\rho_0$.
For the lower bound, consider the scenario where the cheaters prepare the state $\rho_0 = \sum_i p_i \rho_i$
\begin{equation}
\begin{aligned}
P_2&=\sum_i p_i \left(1-\frac{1}{2}\|\rho_0-\rho_i\|\right)=1-\sum_i p_i \left(\frac{1}{2}\|\sum_jp_j\rho_j-\rho_i\|\right)\geq1-\frac{1}{2}\sum_{i,j}p_i p_j\|\rho_i-\rho_j\|\\
&\geq 1-\sum_{i\neq j}p_ip_j.
\end{aligned}
\end{equation}
We can also upper bound the error probability. In general, Alice and Bob do not know which specific state $\rho_0$ is prepared by the cheaters. However, once they obtain an estimate of the position $y''$, they still perform a measurement using the POVM $\{M_0 = \ket{\phi_{iy''}}\bra{\phi_{iy''}}, M_1 = I - \ket{\phi_{iy''}}\bra{\phi_{iy''}}\}$ when using the state $\ket{\psi_i}$. Their decision rule is to conclude that no cheaters are present if the outcome is $M_0$, and to conclude the presence of cheaters if the outcome is $M_1$. Under this strategy, the probability that the cheaters pass the test is given by
\begin{align}
   P_2(\text{honest}|\text{cheater}) = \sum_i p_i |\bra{\phi_{iy''}}\rho_0\ket{\phi_{iy''}}| = \sum_i p_i F(\rho_0, \phi_{iy''}).
\end{align}
Ref.~\cite{afham2022quantum} provides the following upper bound:
$ \max_{\rho_0} \sum_i p_i \sqrt{F}(\rho_0, \phi_{iy''}) \leq \sqrt{ \sum_{i,j} p_i p_j \sqrt{F(\phi_{iy''}, \phi_{jy''})} } = \sqrt{ \sum_{i,j} p_i p_j \sqrt{F(\psi_i, \psi_j) }}$.
Using the inequality $F \leq \sqrt{F}$, we can get
\begin{align}
    P_2(\text{honest}|\text{cheater}) \leq \sqrt{\sum_{i,j}p_i p_j \sqrt{F(\psi_{i}, \psi_{j})}}.
\end{align}
Since, in the absence of cheaters, Alice and Bob will always obtain $y'' = y$ and observe the outcome $M_0$, the conditional success probability in this case is $100\%$. Therefore, the conditional error probability when the prover is honest is
\begin{equation} P_2(\text{cheater}|\text{honest}) = 0. \end{equation}
Hence, the total error probability $P_2$ is bounded from above by
\begin{equation}
P_2=P_2(\text{honest}|\text{cheater}) +P_2(\text{cheater}|\text{honest}) \leq \sqrt{\sum_{i,j}p_i p_j |\bra{\psi_i}\ket{ \psi_j}|}.
\end{equation}

\section{Simultaneous estimation of the position and the detection of the cheaters}\label{SI:simultaneous}
\subsection{Fisher information calculation}
To simultaneously determine the position and evaluate the presence of cheaters, we consider a specific measurement strategy that achieves Heisenberg scaling in the Fisher information (FI), while ensuring that the error probability vanishes as the number of samples increases. We consider a measurement in which Alice and Bob interfere the light they receive using a beam splitter that implements the following operations
\begin{equation}
\begin{aligned}
a_{z,0}^\dagger=R_{00}b_{z,0}^\dagger+R_{01}b_{-z,1}^\dagger,\quad a_{-z,1}^\dagger=R_{10}b_{z,0}^\dagger+R_{11}b_{-z,1}^\dagger.
\end{aligned}
\end{equation}
After passing through the beam splitter, the state prepared by Charlie in Eq.~\ref{eq:phi_SI} evolves as
\begin{equation}\begin{aligned}
&\ket{\xi}=\sum_{\vec{q}}\int d\vec{z}\xi(y,\vec{q},\vec{i},\vec{z},t)\prod_{j=1}^Nb_{z_j,q_j}^\dagger\ket{0}  ,  \\
&\xi(y,\vec{q},\vec{z},t)=\frac{\psi_l}{\sqrt{N!}}\sum_{\vec{i}}h_1(y,\vec{q},\vec{i},\vec{z},t)\prod_{j=1}^N U_{i_j 0}R_{i_j q_j}+\frac{\psi_r}{\sqrt{N!}}\sum_{\vec{i}}h_2(y,\vec{q},\vec{i},\vec{z},t)\prod_{j=1}^N U_{i_j 1}R_{i_j q_j},\\
&h_1(y,\vec{q},\vec{i},\vec{z},t)=\psi(Ny+\sum_{j=1}^N((-1)^{q_j}z_j-(-1)^{i_j}y)+\frac{NL}{2}-cNt)(\frac{\sigma^2}{\pi})^{N/4}\exp[-\frac{\sigma^2}{2}\sum_{j=1}^N((-1)^{q_j}z_j-(-1)^{i_j}y+y+\frac{L}{2}-ct)^2],\\
&h_2(y,\vec{q},\vec{i},\vec{z},t)=\psi(Ny-\sum_{j=1}^N((-1)^{q_j}z_j-(-1)^{i_j}y)-\frac{NL}{2}+cNt)(\frac{\sigma^2}{\pi})^{N/4}\exp[-\frac{\sigma^2}{2}\sum_{j=1}^N(-(-1)^{q_j}z_j+(-1)^{i_j}y+y-\frac{L}{2}+ct)^2].\\
\end{aligned}
\end{equation}
If Alice and Bob directly measure by projecting onto the basis 
\begin{equation}\label{SI_eq:measurement}
\ket{\vec{q},\vec{z}}=\prod_{j=1}^Nb_{z_j,q_j}^\dagger\ket{0} .
\end{equation}
The probability is given by
\begin{equation}\begin{aligned}
&P(\vec{q},\vec{z}|y)=|\bra{\vec{q},\vec{z}}\ket{\xi}|^2\\
&=|\psi_l|^2\sum_{\vec{i},\vec{i}'}h_1(y,\vec{q},\vec{i},\vec{z},t)h^*_1(y,\vec{q},\vec{i}',\vec{z},t)\prod_{j=1}^N (U_{i_j 0}R_{i_j q_j}U^*_{i'_j 0}R^*_{i'_j q_j})\\
&+|\psi_r|^2\sum_{\vec{i},\vec{i}'}h_2(y,\vec{q},\vec{i},\vec{z},t)h^*_2(y,\vec{q},\vec{i}',\vec{z},t)\prod_{j=1}^N (U_{i_j 1}R_{i_j q_j}U^*_{i'_j 1}R^*_{i'_j q_j})\\
&+\psi_l\psi_r^*\sum_{\vec{i},\vec{i}'}h_1(y,\vec{q},\vec{i},\vec{z},t)h^*_2(y,\vec{q},\vec{i}',\vec{z},t)\prod_{j=1}^N (U_{i_j 0}R_{i_j q_j}U^*_{i'_j 1}R^*_{i'_j q_j})\\
&+\psi_l^*\psi_r\sum_{\vec{i},\vec{i}'}h_2(y,\vec{q},\vec{i},\vec{z},t)h_1^*(y,\vec{q},\vec{i}',\vec{z},t)\prod_{j=1}^N (U_{i_j 1}R_{i_j q_j}U^*_{i'_j 0}R^*_{i'_j q_j}),\\
\end{aligned}
\end{equation}
\begin{equation}\begin{aligned}
\frac{\partial P(\vec{q},\vec{z}|y)}{\partial y}&=|\psi_l|^2\sum_{\vec{i},\vec{i}'}h_1(y,\vec{q},\vec{i},\vec{z},t)h^*_1(y,\vec{q},\vec{i}',\vec{z},t)(g_1(y,\vec{q},\vec{i},\vec{z},t)+g^*_1(y,\vec{q},\vec{i}',\vec{z},t))\prod_{j=1}^N (U_{i_j 0}R_{i_j q_j}U^*_{i'_j 0}R^*_{i'_j q_j})\\
&+|\psi_r|^2\sum_{\vec{i},\vec{i}'}h_2(y,\vec{q},\vec{i},\vec{z},t)h^*_2(y,\vec{q},\vec{i}',\vec{z},t)(g_2(y,\vec{q},\vec{i},\vec{z},t)+g^*_2(y,\vec{q},\vec{i}',\vec{z},t))\prod_{j=1}^N (U_{i_j 1}R_{i_j q_j}U^*_{i'_j 1}R^*_{i'_j q_j})\\
&+\psi_l\psi_r^*\sum_{\vec{i},\vec{i}'}h_1(y,\vec{q},\vec{i},\vec{z},t)h^*_2(y,\vec{q},\vec{i}',\vec{z},t)(g_1(y,\vec{q},\vec{i},\vec{z},t)+g^*_2(y,\vec{q},\vec{i}',\vec{z},t))\prod_{j=1}^N (U_{i_j 0}R_{i_j q_j}U^*_{i'_j 1}R^*_{i'_j q_j})\\
&+\psi_l^*\psi_r\sum_{\vec{i},\vec{i}'}h_2(y,\vec{q},\vec{i},\vec{z},t)h_1^*(y,\vec{q},\vec{i}',\vec{z},t)(g_2(y,\vec{q},\vec{i},\vec{z},t)+g^*_1(y,\vec{q},\vec{i}',\vec{z},t))\prod_{j=1}^N (U_{i_j 1}R_{i_j q_j}U^*_{i'_j 0}R^*_{i'_j q_j}),\\
\end{aligned}
\end{equation}
where we use the pulse shape defined in Eq.~\ref{eq:psi_k} and have defined
\begin{equation}\begin{aligned}
&\frac{\partial h_1(y,\vec{q},\vec{i},\vec{z},t)}{\partial y}=h_1(y,\vec{q},\vec{i},\vec{z},t) g_1(y,\vec{q},\vec{i},\vec{z},t),\quad \frac{\partial h_2(y,\vec{q},\vec{i},\vec{z},t)}{\partial y}=h_2(y,\vec{q},\vec{i},\vec{z},t) g_2(y,\vec{q},\vec{i},\vec{z},t),    \\
&g_1=-\beta^2(N-\sum_{j=1}^N(-1)^{i_j})(Ny+\sum_{j=1}^N((-1)^{q_j}z_j-(-1)^{i_j}y)+\frac{NL}{2}-cNt)\\
&\quad\quad\quad-\sigma^2\sum_{j=1}^N(1-(-1)^{i_j})((-1)^{q_j}z_j-(-1)^{i_j}y+y+\frac{L}{2}-ct),\\
&g_2=-\beta^2(N+\sum_{j=1}^N(-1)^{i_j})(Ny-\sum_{j=1}^N((-1)^{q_j}z_j-(-1)^{i_j}y)-\frac{NL}{2}+   cNt)\\
&\quad\quad\quad-\sigma^2\sum_{j=1}^N(1+(-1)^{i_j})(-(-1)^{q_j}z_j+(-1)^{i_j}y+y-\frac{L}{2}+ct),\\
\end{aligned}
\end{equation}
where the $\sigma^2$ terms in $g_{1,2}$ can be neglected since we take $\sigma\rightarrow0$. With the pulse shape defined in Eq.~\ref{eq:psi_k},  $h_{1,2}$ and $g_{1,2}$ are real-valued functions. For simplicity, we set $y = 0$ and calculate the FI in the vicinity of this point. Under this assumption, we find that $h_{1,2}$ is no longer dependent on the index vector $\vec{i}$, and thus we have

\begin{equation}\begin{aligned}
&\xi(0,\vec{q},\vec{z},t)=\frac{\psi_l}{\sqrt{N!}}h_1(0,\vec{q},\vec{z},t)\prod_{j=1}^N (\sum_{i_j=0,1}U_{i_j 0}R_{i_j q_j})+\frac{\psi_r}{\sqrt{N!}}h_2(0,\vec{q},\vec{z},t)\prod_{j=1}^N(\sum_{i_j=0,1} U_{i_j 1}R_{i_j q_j}).\\
\end{aligned}
\end{equation}
We further choose $R_{ij}=U_{ij}^*$, which gives
\begin{equation}\begin{aligned}
&\xi(0,\vec{q},\vec{z},t)=\frac{\psi_l}{\sqrt{N!}}h_1(0,\vec{q},\vec{z},t)\prod_{j=1}^N \delta_{q_j0}+\frac{\psi_r}{\sqrt{N!}}h_2(0,\vec{q},\vec{z},t)\prod_{j=1}^N\delta_{q_j1}.\\
\end{aligned}
\end{equation}
We can then identify the outcomes with nonvanishing probability as given by
\begin{equation}\label{eq:P_Charlie}
P(\vec{q}_0,\vec{z}|y)=|\psi_l|^2 h_1^2(y,\vec{q}_0,\vec{z},t),\quad P(\vec{q}_1,\vec{z}|y)=|\psi_r|^2 h_2^2(y,\vec{q}_1,\vec{z},t),
\end{equation}
where $\vec{q}_0=[0,0,\cdots,0]$, $\vec{q}_1=[1,1,\cdots,1]$.

In the case of $\vec{q} = \vec{q}_0$, we note that $h_1(0, \vec{q}_0, \vec{z}, t) = h_2(0, \vec{q}_0, \vec{z}, t)$, and we denote this common value as $h(0, \vec{q}_0, \vec{z}, t)$.
\begin{equation}\begin{aligned}
\frac{\partial P(\vec{q_0},\vec{z}|y)}{\partial y}
&=|\psi_l|^2h^2(0,\vec{q}_0,\vec{z},t)\sum_{\vec{i},\vec{i}'}(\sum_{j=1}^Nz_j+\frac{NL}{2}-cNt)2\beta^2(-Q-Q')\prod_{j=1}^N (U_{i_j 0}R_{i_j 0}U^*_{i'_j 0}R^*_{i'_j 0})\\
&+|\psi_r|^2h^2(0,\vec{q}_0,\vec{z},t)\sum_{\vec{i},\vec{i}'}(\sum_{j=1}^Nz_j+\frac{NL}{2}-cNt)2\beta^2(2N-Q-Q')\prod_{j=1}^N (U_{i_j 1}R_{i_j 0}U^*_{i'_j 1}R^*_{i'_j 0})\\
&+\psi_l\psi_r^*h^2(0,\vec{q}_0,\vec{z},t)\sum_{\vec{i},\vec{i}'}(\sum_{j=1}^Nz_j+\frac{NL}{2}-cNt)2\beta^2(-Q+N-Q')\prod_{j=1}^N (U_{i_j 0}R_{i_j 0}U^*_{i'_j 1}R^*_{i'_j 0})\\
&+\psi_l^*\psi_rh^2(0,\vec{q}_0,\vec{z},t)\sum_{\vec{i},\vec{i}'}(\sum_{j=1}^Nz_j+\frac{NL}{2}-cNt)2\beta^2(N-Q-Q')\prod_{j=1}^N (U_{i_j 1}R_{i_j 0}U^*_{i'_j 0}R^*_{i'_j 0}).\\
\end{aligned}
\end{equation}

In the case of $\vec{q} = \vec{q}_1$, we note that $h_1(0, \vec{q}_1, \vec{z}, t) = h_2(0, \vec{q}_1, \vec{z}, t)$, and we denote this common value as $h(0, \vec{q}_1, \vec{z}, t)$.
\begin{equation}\begin{aligned}
\frac{\partial P(\vec{q_1},\vec{z}|y)}{\partial y}&=|\psi_l|^2h^2(0,\vec{q}_1,\vec{z},t)\sum_{\vec{i},\vec{i}'}(-\sum_{j=1}^Nz_j+\frac{NL}{2}-cNt)2\beta^2(-Q-Q')\prod_{j=1}^N (U_{i_j 0}R_{i_j 1}U^*_{i'_j 0}R^*_{i'_j 1})\\
&+|\psi_r|^2h^2(0,\vec{q}_1,\vec{z},t)\sum_{\vec{i},\vec{i}'}(-\sum_{j=1}^Nz_j+\frac{NL}{2}-cNt)2\beta^2(2N-Q-Q')\prod_{j=1}^N (U_{i_j 1}R_{i_j 1}U^*_{i'_j 1}R^*_{i'_j 1})\\
&+\psi_l\psi_r^*h^2(0,\vec{q}_1,\vec{z},t)\sum_{\vec{i},\vec{i}'}(-\sum_{j=1}^Nz_j+\frac{NL}{2}-cNt)2\beta^2(-Q+N-Q')\prod_{j=1}^N (U_{i_j 0}R_{i_j 1}U^*_{i'_j 1}R^*_{i'_j 1})\\
&+\psi_l^*\psi_rh^2(0,\vec{q}_1,\vec{z},t)\sum_{\vec{i},\vec{i}'}(-\sum_{j=1}^Nz_j+\frac{NL}{2}-cNt)2\beta^2(N-Q-Q')\prod_{j=1}^N (U_{i_j 1}R_{i_j 1}U^*_{i'_j 0}R^*_{i'_j 1}).\\
\end{aligned}
\end{equation}

To further simplify the discussion, we assume 
\begin{equation}\label{eq:R_U_equal}
\begin{aligned}
R=U^*=\frac{1}{\sqrt{2}}\left[\begin{matrix}
1 & 1\\
-1 & 1
\end{matrix}\right],
\end{aligned}
\end{equation}

\begin{equation}\begin{aligned}
&\frac{\partial P(\vec{q_0},\vec{z}|y)}{\partial y}=\frac{1}{2^{2N}}(\sum_{j=1}^Nz_j+\frac{NL}{2}-cNt)2\beta^2h^2(0,\vec{q}_0,\vec{z},t)\sum_{Q,Q'}C_N^Q C_N^{Q'} \\
&\times[|\psi_l|^2(-Q-Q')+|\psi_r|^2(2N-Q-Q')(-1)^{Q+Q'}+\psi_l\psi_r^*(-Q+N-Q')(-1)^{Q'}+\psi_l^*\psi_r(N-Q-Q')(-1)^{Q}],\\
\end{aligned}
\end{equation}
\begin{equation}\begin{aligned}
&\frac{\partial P(\vec{q_1},\vec{z}|y)}{\partial y}=\frac{1}{2^{2N}}(-\sum_{j=1}^Nz_j+\frac{NL}{2}-cNt)2\beta^2h^2(0,\vec{q}_1,\vec{z},t)\sum_{Q,Q'}C_N^Q C_N^{Q'} \\
&\times[|\psi_l|^2(-Q-Q')(-1)^{Q+Q'}+|\psi_r|^2(2N-Q-Q')+\psi_l\psi_r^*(-Q+N-Q')(-1)^{Q}+\psi_l^*\psi_r(N-Q-Q')(-1)^{Q'}].\\
\end{aligned}
\end{equation}
We can then calculate the Fisher information in the vicinity of  $y=0$ and with operations in Eq. \ref{eq:R_U_equal}, 
\begin{equation}
F=\int d\vec{z}(\frac{\partial P(\vec{q_0},\vec{z}|y)}{\partial y})^2\frac{1}{P(\vec{q_0},\vec{z}|y)}+\int d\vec{z}(\frac{\partial P(\vec{q_1},\vec{z}|y)}{\partial y})^2\frac{1}{P(\vec{q_1},\vec{z}|y)},
\end{equation}
\begin{equation}\begin{aligned}
&\int d\vec{z}(\frac{\partial P(\vec{q_0},\vec{z}|y)}{\partial y})^2\frac{1}{P(\vec{q_0},\vec{z}|y)}\\
&=\frac{1}{|\psi_l|^2}\frac{1}{2^{4N}}4\beta^4 \int d\vec{z} (\sum_{j=1}^Nz_j+\frac{NL}{2}-cNt)^2h^2(0,\vec{q}_0,\vec{z},t)\sum_{Q_1,Q_1',Q_2,Q_2'}C_N^{Q_1} C_N^{Q_1'} C_N^{Q_2} C_N^{Q_2'} \\
&\times[|\psi_l|^2(-Q_1-Q_1')+|\psi_r|^2(2N-Q_1-Q_1')(-1)^{Q_1+Q_1'}+\psi_l\psi_r^*(-Q_1+N-Q_1')(-1)^{Q_1'}+\psi_l^*\psi_r(N-Q_1-Q_1')(-1)^{Q_1}]\\    
&\times[|\psi_l|^2(-Q_2-Q_2')+|\psi_r|^2(2N-Q_2-Q_2')(-1)^{Q_2+Q_2'}+\psi_l\psi_r^*(-Q_2+N-Q_2')(-1)^{Q_2'}+\psi_l^*\psi_r(N-Q_2-Q_2')(-1)^{Q_2}],\\   
\end{aligned}
\end{equation}
\begin{equation}\begin{aligned}
&\int d\vec{z}(\frac{\partial P(\vec{q_1},\vec{z}|y)}{\partial y})^2\frac{1}{P(\vec{q_1},\vec{z}|y)}\\
&=\frac{1}{|\psi_r|^2}\frac{1}{2^{4N}}4\beta^4 \int d\vec{z} (-\sum_{j=1}^Nz_j+\frac{NL}{2}-cNt)^2h^2(0,\vec{q}_1,\vec{z},t)\sum_{Q_1,Q_1',Q_2,Q_2'}C_N^{Q_1} C_N^{Q_1'} C_N^{Q_2} C_N^{Q_2'} \\
&\times[|\psi_l|^2(-Q_1-Q_1')(-1)^{Q_1+Q_1'}+|\psi_r|^2(2N-Q_1-Q_1')+\psi_l\psi_r^*(-Q_1+N-Q_1')(-1)^{Q_1}+\psi_l^*\psi_r(N-Q_1-Q_1')(-1)^{Q_1'}]\\    
&\times[|\psi_l|^2(-Q_2-Q_2')(-1)^{Q_2+Q_2'}+|\psi_r|^2(2N-Q_2-Q_2')+\psi_l\psi_r^*(-Q_2+N-Q_2')(-1)^{Q_2}+\psi_l^*\psi_r(N-Q_2-Q_2')(-1)^{Q_2'}].\\   
\end{aligned}
\end{equation}
Note that
\begin{equation}
\int d\vec{z} (\sum_{j=1}^Nz_j+\frac{NL}{2}-cNt)^2h^2(0,\vec{q}_0,\vec{z},t)=\int d\vec{z} (-\sum_{j=1}^Nz_j+\frac{NL}{2}-cNt)^2h^2(0,\vec{q}_1,\vec{z},t)=1/2\beta^2,
\end{equation}
\begin{equation}\begin{aligned}
&\sum_{Q,Q'}C_N^Q C_N^{Q'} [|\psi_l|^2(-Q-Q')+|\psi_r|^2(2N-Q-Q')(-1)^{Q+Q'}+\psi_l\psi_r^*(-Q+N-Q')(-1)^{Q'}+\psi_l^*\psi_r(N-Q-Q')(-1)^{Q}]\\
&=-|\psi_l|^22^{2N}N,
\end{aligned}
\end{equation}
\begin{equation}\begin{aligned}
&\sum_{Q,Q'}C_N^Q C_N^{Q'} [|\psi_l|^2(-Q-Q')(-1)^{Q+Q'}+|\psi_r|^2(2N-Q-Q')+\psi_l\psi_r^*(-Q+N-Q')(-1)^{Q}+\psi_l^*\psi_r(N-Q-Q')(-1)^{Q'}]\\
&=|\psi_r|^22^{2N}N.
\end{aligned}
\end{equation}
Combining the above calculation, we can find the FI of estimating $y$
\begin{equation}
F=2\beta^2N^2|\psi_l|^2+2\beta^2N^2|\psi_r|^2=2\beta^2N^2,
\end{equation}
which is slightly worse than the predicted QFI, but still achieves the Heisenberg limit over $N$.

\subsection{Error probability in detecting potential cheating}

We now analyze the security of this approach. We still assume the honest prover Charlie is located at position $y = 0$, while the cheaters attempt to impersonate him at a fake position $y' \neq y$. Alice and Bob perform the measurement described in Eq. \ref{SI_eq:measurement} and obtain an estimated position $y''$. For generality, we again allow $y''$ to differ from both $y$ and $y'$.
Alice and Bob adopt the following two-step decision strategy:

Step (1): They compute the total probability associated with the observed measurement outcomes
\begin{equation}
\begin{aligned}
P_{\text{tot}}=\sum_{\vec{q}}\int d\vec{z}P(\vec{q},\vec{z}).
\end{aligned}
\end{equation}
If Alice and Bob find that $P_{\text{tot}} < 1$—that is, if they observe that in some instances not all $N$ photons from the original state $\ket{\psi}$ are detected—they conclude that cheaters are present. Noting that $\delta_{p,q} = \int d\vec{z}, h_p(y, \vec{q}, \vec{i}, \vec{z}, t), h_q^*(y, \vec{q}, \vec{i}', \vec{z}, t) \leq 1$, and that this quantity becomes independent of $\vec{q}$ after integration, in the scenario where the cheaters act directly on the original states sent by Alice and Bob, we have
\begin{equation}\begin{aligned}
P_{\text{tot}}&= \sum_{\vec{i},\vec{i}'}\bigg[|\psi_l|^2\delta_{11}\prod_{j=1}^N \sum_{q_j}(U'_{i_j 0}R_{i_j q_j}U'^*_{i'_j 0}R^*_{i'_j q_j})+|\psi_r|^2\delta_{22}\prod_{j=1}^N \sum_{q_j}(U'_{i_j 1}R_{i_j q_j}U'^*_{i'_j 1}R^*_{i'_j q_j})\\
&+\psi_l\psi_r^*\delta_{12}\prod_{j=1}^N\sum_{q_j} (U'_{i_j 0}R_{i_j q_j}U'^*_{i'_j 1}R^*_{i'_j q_j})+\psi_l^*\psi_r\delta_{21}\prod_{j=1}^N \sum_{q_j}(U'_{i_j 1}R_{i_j q_j}U'^*_{i'_j 0}R^*_{i'_j q_j})\bigg]\\ 
&=\sum_{\vec{i}}\bigg[|\psi_l|^2\delta_{11}\prod_{j=1}^N (U'_{i_j 0}U'^*_{i_j 0})+|\psi_r|^2\delta_{22}\prod_{j=1}^N (U'_{i_j 1}U'^*_{i_j 1})+\psi_l\psi_r^*\delta_{12}\prod_{j=1}^N(U'_{i_j 0}U'^*_{i_j 1})+\psi_l^*\psi_r\delta_{21}\prod_{j=1}^N (U'_{i_j 1}U'^*_{i_j 0})\bigg].\\
\end{aligned}\end{equation}
We then still use an argument similar to Eq.~\ref{SI_eq:U_U_delta}
\begin{equation}\begin{aligned}
&\sum_{\vec{i}}\prod_{j=1}^NU_{i_j,p}'^*U_{i_j,q}'\delta_{p+1,q+1}=\sum_{Q=0}^NC_N^Qa^{N-Q}b^Q\delta_{p+1,q+1} \\
&\leq\sum_{Q=0}^NC_N^Q|a|^{N-Q}|b|^Q\delta_{p+1,q+1}\leq \sum_{Q=0}^NC_N^Q|a|^{N-Q}|b|^Q=(|a|+|b|)^N\leq|u_p'|^N|u_q'|^N,\\
\end{aligned}
\end{equation}
\begin{equation}\begin{aligned}\label{SI_eq:ab_u}
&(|a|+|b|)^2=(|U'_{0p}||U'_{0q}|+|U'_{1p}||U'_{1q}|)^2\leq(|U'_{0p}|^2+|U'_{1p}|^2)(|U'_{0q}|^2+|U'_{1q}|^2)=|u_p'|^2|u_q'|^2,
\end{aligned}
\end{equation}
where $a=U_{0p}'U_{0q}'^*$, $b=U_{1p}'U_{1q}'^*$. Since we cannot have $|u_0'| = |u_1'| = 1$ except in the special case considered below in Eq.~\ref{SI_eq:special_U'}, the  probability is bounded similar to Eq. \ref{SI_eq:product_phi_gamma} (excluding these special cases of $U'$ discussed below).
\begin{equation}
P_{\text{tot}}\leq \max\{|\psi_l|^2,|\psi_r|^2\}.
\end{equation}
Noting that
$U_{00}' = V_{00} P_{00}$, $U_{10}' = V_{01} Q_{10}$, $U_{01}' = W_{01} P_{10}$, and $U_{11}' = W_{00} Q_{00}$,
the condition $|u_0'| = |u_1'| = 1$ can be satisfied only in the special cases where
\begin{equation}\label{SI_eq:special_U'}
U'=\left[\begin{matrix}
1 & 0\\
0 & e^{i\theta}
\end{matrix}\right]\quad \text{or}\quad \left[\begin{matrix}
0 & e^{i\alpha}\\
e^{i\beta} & 0
\end{matrix}\right],
\end{equation}
which correspond to the cheaters either directly reflecting the state or applying only a phase shift.
Such special cases can be excluded—with some overhead—by performing a projection onto the state $\prod_{j=1}^N a_{z_j, q_j}^\dagger \ket{0}$ without using the beam splitter. In this configuration, all $N$ photons will be detected on one side—either Alice or Bob—with probabilities $|\psi_l|^2$ and $|\psi_r|^2$, respectively. These cases can be distinguished, as Charlie’s application of the unitary $U$ will cause the $N$ photons to be detected on both sides.

Step (2): If Alice and Bob always detect all $N$ photons and have already excluded the special cases described in Eq.~\ref{SI_eq:special_U'}, they then proceed with the following strategy to rule out the possibility that the cheaters have discarded the original states and prepared their own instead.
Let $Q_1(\vec{q}, \vec{z})$ and $Q_2(\vec{q}, \vec{z})$ denote the honest probability distributions corresponding to two different random states, and let $Q_0(\vec{q}, \vec{z})$ denote the (unknown) distribution of the cheaters’ prepared state. Alice and Bob can potentially detect the presence of cheaters by checking whether the observed outcome probabilities vary under different random input choices.
Suppose there are $M$ total measurement outcomes, with $M/2$ samples corresponding to $Q_1$ and $M/2$ to $Q_2$. Since the form of $Q_0(\vec{q}, \vec{z})$ is unknown, Alice and Bob cannot directly compare their data to it. Instead, they estimate the observed distribution empirically from the samples. After collecting $M$ samples, they construct estimated distributions $\hat{P}_{1,2}(x)$ using the procedure described in Lemma~\ref{proposition:Q_hat}, where $x$ denotes the measurement outcome, used as a simplified label for the pair $(\vec{q}, \vec{z})$.
They then compute the total variation distance $TV(Q_1 \otimes Q_2, \hat{P}_1 \otimes \hat{P}_2)$. If this distance exceeds a fixed threshold $\epsilon$, they conclude that cheating has occurred. Conversely, if $TV(Q_1 \otimes Q_2, \hat{P}_1 \otimes \hat{P}_2) < \epsilon$, they conclude that no cheating is detected.

We will first prove the following lemmas regarding the deviation of the estimated distribution from the expected distributions with finite samples.
\begin{lemma}\label{proposition:Q_hat}
For a probability distribution $Q(x)$ over a continuous variable $x$, with $\int dx, Q(x) = 1$, suppose we obtain $M$ independent samples ${X_1, X_2, \dots, X_M}$ drawn from $Q(x)$. We then construct the estimated distribution $\hat{Q}(x) = \frac{1}{M} \sum_{i=1}^M \frac{1}{h} K\left( \frac{x - X_i}{h} \right)$, where $K(x) = \exp(-x^2/2)/\sqrt{2\pi}$. For any $\varepsilon > 0$, the probability that the estimated distribution $\hat{Q}$ deviates from the true distribution $Q$, as quantified by the total variation distance $TV(Q, \hat{Q})$, is given by
\begin{equation}
P(TV(Q,\hat{Q})>\varepsilon)\leq \exp(-M\xi).
\end{equation}
\end{lemma}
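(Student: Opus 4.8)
The plan is to write the total variation distance as $TV(Q,\hat{Q}) = \tfrac{1}{2}\|\hat{Q}-Q\|_{1}$, with $\|\cdot\|_{1}$ the $L^{1}$ norm (note $\hat Q$ is itself a probability density, being an average of $M$ normalized kernels), and then separate the two standard sources of error: the fluctuation of $TV(Q,\hat Q)$ about its mean, and the mean itself. The first is handled by a concentration argument that is insensitive to the bandwidth, while the second forces a choice $h=h_{M}\to 0$ slowly enough (with $Mh_{M}\to\infty$) to make the estimate consistent.

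First I would establish a bounded-difference property for the map $(X_{1},\dots,X_{M})\mapsto TV(Q,\hat Q)$. Replacing a single sample $X_{i}$ by $X_{i}'$ changes $\hat Q$ only through one kernel term, so by the triangle inequality in $L^{1}$, together with the fact that each scaled kernel $\tfrac{1}{h}K\!\big((\cdot-X_i)/h\big)$ integrates to one, one gets $\|\hat Q - \hat Q'\|_{1}\le 2/M$ and hence $|TV(Q,\hat Q)-TV(Q,\hat Q')|\le 1/M$. McDiarmid's bounded-difference inequality then yields $P\big(TV(Q,\hat Q)-\mathbb{E}[TV(Q,\hat Q)]>t\big)\le \exp(-2Mt^{2})$, a rate that depends neither on $h$ nor on any property of $Q$.

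Next I would show that $\mathbb{E}[TV(Q,\hat Q)]$ can be driven below $\varepsilon/2$ for $M$ large. Writing $\mathbb{E}\hat Q = K_{h}*Q$, the triangle inequality splits the mean error into a bias term $\tfrac12\|K_h*Q - Q\|_1$ and a variance term $\tfrac12\,\mathbb{E}\|\hat Q - \mathbb{E}\hat Q\|_1$. The bias term vanishes as $h\to0$ by $L^1$-continuity of translations, valid for any density $Q$. The variance term is controlled via Jensen's inequality, $\mathbb{E}\|\hat Q - \mathbb{E}\hat Q\|_1 \le \int \sqrt{\mathrm{Var}\,\hat Q(x)}\,dx$, combined with $\mathrm{Var}\,\hat Q(x)\le \tfrac1M\int K_h(x-y)^2 Q(y)\,dy$; this is precisely the classical $L^1$ consistency of the kernel density estimator, which tends to zero provided $Mh_M\to\infty$.

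Finally, choosing for instance $h_M = M^{-1/3}$ and taking $M$ large enough that $\mathbb{E}[TV(Q,\hat Q)]\le \varepsilon/2$, I combine the two ingredients: $P\big(TV(Q,\hat Q)>\varepsilon\big)\le P\big(TV - \mathbb{E}TV > \varepsilon/2\big)\le \exp(-M\varepsilon^2/2)$, so the claim holds with $\xi = \varepsilon^2/2$ once $M$ exceeds the implied threshold. The main obstacle is the variance term: proving that the expected $L^1$ fluctuation of the estimate about its smoothed mean vanishes requires integrability/tail control on $Q$ together with the $Mh_M\to\infty$ scaling, and is exactly the content of the classical Devroye--Gy\"orfi $L^1$ theory of density estimation; by contrast the concentration step is routine once the $1/M$ bounded-difference constant is in hand.
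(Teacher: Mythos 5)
Your proposal is correct and follows essentially the same route as the paper's proof: a bias--variance split of $\mathbb{E}[TV(Q,\hat Q)]$ with the variance term controlled by Jensen's inequality and the pointwise kernel variance bound $\mathrm{Var}\,\hat Q(x)\le Q(x)R/(Mh)$, plus McDiarmid's inequality with bounded-difference constant $1/M$ for concentration. The only (immaterial) differences are that you invoke $L^1$-continuity of translations for the bias where the paper uses a second-order Taylor expansion requiring $\int|Q''|<\infty$, and you take $h=M^{-1/3}$ rather than $M^{-1/5}$; both choices, and both versions of the bias argument, yield the claimed bound.
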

\begin{proof}
We first notice that $\hat{Q}$ is a function depending on the samples where each $X_i$ follows the distribution $Q(x)$, so we take the expectation for this function and upper bound as
\begin{equation}\label{SI_eq:TV_Q_Q}
TV(Q,\hat{Q})\leq TV(Q,\mathbb{E}[\hat{Q}])+TV(\mathbb{E}[\hat{Q}],\hat{Q}).
\end{equation}
We now evaluate
\begin{equation}
\begin{aligned}
\mathbb{E}[\hat{Q}]&=\frac{1}{Mh}\sum_{i=1}^M\int dX_i K(\frac{x-X_i}{h})Q(X_i)=\frac{1}{M}\sum_{i=1}^M\int du K(u)Q(x-uh)\\
&=\frac{1}{M}\sum_{i=1}^M\int du K(u)[Q(x)-hu Q'(x)+\frac{h^2u^2}{2}Q''(x)+o(h^2u^2)]\\
&=Q(x)+\frac{h^2}{2}\mu Q''(x)+o(h^2),
\end{aligned}
\end{equation}
where we use $u=(x-X_i)/h$ in the second equality and define $\mu=\int du K(u)u^2$. And we have
\begin{equation}
\begin{aligned}\label{SI_eq:TV_Q_EQ}
TV(Q,\mathbb{E}[\hat{Q}])=\frac{1}{2}\int dx|Q(x)-\mathbb{E}[\hat{Q}](x)|=\frac{h^2}{4}\mu \int dx|Q''(x)|+o(h^2).
\end{aligned}
\end{equation}

For $TV(\mathbb{E}[\hat{Q}],\hat{Q})$, since it depends on the actual outcome $X_i$, we will bound the probability $P(TV(\mathbb{E}[\hat{Q}],\hat{Q})>\varepsilon)$,  we will first bound the expectation $\mathbb{E}[TV(\mathbb{E}[\hat{Q}],\hat{Q})]$ and then calculate the deviation from this expectation. 
\begin{equation}
\mathbb{E}[TV(\mathbb{E}[\hat{Q}],\hat{Q})]=\frac{1}{2}\int dx \mathbb{E}\bigg[\bigg|\mathbb{E}[\hat{Q}](x)-\hat{Q}(x)\bigg|\bigg]\leq \frac{1}{2}\int dx \sqrt{\text{Var}[\hat{Q}(x)]},
\end{equation}
where we use the Cauchy–Schwarz inequality.
\begin{equation}\begin{aligned}
&\mathbb{E}[\frac{1}{h}K(\frac{x-X_i}{h})]=\int \frac{1}{h}K(\frac{x-X_i}{h})Q(X_i)dX_i=\int du K(u)Q(x-hu)\\
&=\int K(u)[Q(x)-hu Q'(x)+\frac{h^2u^2}{2}Q''(x)+o(h^2)]=Q(x)+\frac{h^2}{2}\mu Q''(x)+o(h^2) ,   
\end{aligned}
\end{equation}
\begin{equation}\begin{aligned}
&\mathbb{E}[\frac{1}{h^2}K^2(\frac{x-X_i}{h})]=\int \frac{1}{h^2}K^2(\frac{x-X_i}{h})Q(X_i)dX_i=\int du\frac{1}{h} K^2(u)Q(x-hu)\\
&=\int \frac{1}{h}K^2(u)[Q(x)-hu Q'(x)+\frac{h^2u^2}{2}Q''(x)+o(h^2)]=Q(x)\frac{R}{h}+O(h) ,
\end{aligned}
\end{equation}
\begin{equation}
\text{Var}[\hat{Q}(x)]=\frac{1}{M}\text{Var}[\frac{1}{h}K(\frac{x-X_i}{h})]=\frac{1}{M}(Q(x)\frac{R}{h}-Q^2(x)+O(h))\leq Q(x)\frac{R}{Mh},
\end{equation}
where we define $R=\int K^2(u)du$. And hence we have
\begin{equation}\label{SI_eq:E_TV}
\mathbb{E}[TV(\mathbb{E}[\hat{Q}],\hat{Q})]\leq \frac{1}{2}\sqrt{\frac{R}{Mh}}\int dx \sqrt{Q(x)}.
\end{equation}
Let us now bound the deviation of $TV(\mathbb{E}[\hat{Q}],\hat{Q})$ from its expectation, this can be derived from McDiarmid's inequality, which states that 
\begin{equation}\begin{aligned}
& P\bigg(f(X_1,X_2,\cdots,X_i,\cdots,X_M)-\mathbb{E}[f(X_1,X_2,\cdots,X_i',\cdots,X_M)]>t\bigg)\leq \exp(-2t^2/\sum_{i=1}^Mc_i^2),\\
& \sup_{X_i'}|f(X_1,X_2,\cdots,X_i,\cdots,X_M)-f(X_1,X_2,\cdots,X_i',\cdots,X_M)|\leq c_i   . 
\end{aligned}
\end{equation}
For the case of $f(X_1,X_2,\cdots,X_M)=TV(\mathbb{E}[\hat{Q}],\hat{Q})$
\begin{equation}\begin{aligned}
& \sup_{X_i'}|f(X_1,X_2,\cdots,X_i,\cdots,X_M)-f(X_1,X_2,\cdots,X_i',\cdots,X_M)|\\
&=\frac{1}{2}\sup_{X_i'}\left|\int dx\left|\mathbb{E}[\hat{Q}](x)-\hat{Q}(x|X_1,\cdots,X_i,\cdots,X_M)\right|-\int dx\left|\mathbb{E}[\hat{Q}](x)-\hat{Q}(x|X_1,\cdots,X_i',\cdots,X_M)\right|\right|\\
&\leq \frac{1}{2}\sup_{X_i'}\int dx\left|\hat{Q}(x|X_1,\cdots,X_i,\cdots,X_M)-\hat{Q}(x|X_1,\cdots,X_i',\cdots,X_M)\right|\\
&=\frac{1}{2Mh}\sup_{X_i'}\int dx\left|K(\frac{x-X_i}{h})-K(\frac{x-X_i'}{h})\right|\leq \frac{1}{2Mh}\sup_{X_i'}\int dx\left(K(\frac{x-X_i}{h})+K(\frac{x-X_i'}{h})\right)\\
&=\frac{1}{M}.
\end{aligned}
\end{equation}
We have thus proved
\begin{equation}
P(TV(\mathbb{E}[\hat{Q}],\hat{Q})-\mathbb{E}[TV(\mathbb{E}[\hat{Q}],\hat{Q})]>t)\leq \exp(-2Mt^2).
\end{equation}
Combined with Eq.~\ref{SI_eq:E_TV}, we have
\begin{equation}
P\left(TV(\mathbb{E}[\hat{Q}],\hat{Q})>t+\frac{1}{2}\sqrt{\frac{R}{Mh}}\int dx \sqrt{Q(x)}\right)\leq \exp(-2Mt^2).
\end{equation}
We further combine Eq. \ref{SI_eq:TV_Q_EQ} and Eq. \ref{SI_eq:TV_Q_Q} and get
\begin{equation}
P\left(TV(\mathbb{E}[\hat{Q}],\hat{Q})>t+\frac{1}{2}\sqrt{\frac{R}{Mh}}\int dx \sqrt{Q(x)}+\frac{h^2}{4}\mu \int dx|Q''(x)|\right)\leq \exp(-2Mt^2).
\end{equation}
We choose $h=M^{-1/5}$, which then gives
\begin{equation}
P\left(TV(\mathbb{E}[\hat{Q}],\hat{Q})>t+\frac{1}{2M^{2/5}}\sqrt{R}\int dx \sqrt{Q(x)}+\frac{1}{4M^{2/5}}\mu \int dx|Q''(x)|\right)\leq \exp(-2Mt^2).
\end{equation}
So, for sufficiently large $M$, and any $\varepsilon$, we can choose $t=\varepsilon-\frac{1}{2M^{2/5}}\sqrt{R}\int dx \sqrt{Q(x)}-\frac{1}{4M^{2/5}}\mu \int dx|Q''(x)|$, $\xi=2t^2$, which completes the proof.

\end{proof}

\begin{lemma}
Given the probability distribution $A,B,C$, the total variation distance has the following properties
\begin{equation}
\begin{aligned}
&TV(A,C)+TV(B,D)\geq  TV(A\otimes B, C\otimes D),\\
&TV(A\otimes B, C^{\otimes 2})\geq\frac{1}{2}TV(A,B).\\
\end{aligned}
\end{equation}
\end{lemma}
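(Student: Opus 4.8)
The plan is to derive both inequalities from three elementary properties of the total variation distance $TV(P,Q)=\tfrac12\int|P-Q|$: the triangle inequality, the factorization identity $TV(P\otimes E, R\otimes E)=TV(P,R)$ obtained by pulling out a common marginal, and the data-processing (contraction) property that $TV$ never increases under a stochastic map, in particular under marginalization. None of these requires anything specific to the measurement distributions at hand, so the argument will hold equally for the continuous label $\vec{z}$ and the discrete label $\vec{q}$.

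For the first inequality I would insert the hybrid distribution $C\otimes B$ and apply the triangle inequality,
\[
TV(A\otimes B, C\otimes D)\leq TV(A\otimes B, C\otimes B)+TV(C\otimes B, C\otimes D).
\]
Each term is then evaluated by factoring the shared marginal: writing the integrand as $|A(x)B(y)-C(x)B(y)|=B(y)\,|A(x)-C(x)|$ and integrating out $y$ (which integrates to $1$) gives $TV(A\otimes B, C\otimes B)=TV(A,C)$, and the analogous step with $C(x)$ factored out gives $TV(C\otimes B, C\otimes D)=TV(B,D)$. Summing yields the claimed subadditivity. Here the only thing to watch is the normalization convention $TV=\tfrac12\int|\cdot|$, which I would check so that the factor-out step leaves no stray constants.

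For the second inequality I would exploit the symmetry of $C^{\otimes 2}$ under the coordinate swap $\sigma(x,y)=(y,x)$, which fixes $C\otimes C$ but sends $A\otimes B$ to $B\otimes A$. Since applying $\sigma$ to both arguments is a measure-preserving relabeling that leaves $TV$ invariant, we have $TV(C^{\otimes 2}, B\otimes A)=TV(C^{\otimes 2}, A\otimes B)$. The triangle inequality routed through $C^{\otimes 2}$ then gives
\[
TV(A\otimes B, B\otimes A)\leq TV(A\otimes B, C^{\otimes 2})+TV(C^{\otimes 2}, B\otimes A)=2\,TV(A\otimes B, C^{\otimes 2}).
\]
Finally I would lower-bound the left-hand side by data processing: marginalizing onto the first coordinate is a valid stochastic map sending $A\otimes B\mapsto A$ and $B\otimes A\mapsto B$, hence $TV(A\otimes B, B\otimes A)\geq TV(A,B)$. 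Combining the two displays gives $TV(A\otimes B, C^{\otimes 2})\geq\tfrac12 TV(A,B)$.

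The steps are all standard, so the main point of care is the \emph{lower} bound in the second inequality rather than any hard estimate: the key realization is that the swap symmetry of $C^{\otimes 2}$ turns a statement about the asymmetric pair $(A,B)$ into a statement about the symmetric distance $TV(A\otimes B, B\otimes A)$, which is then controlled below by contraction under marginalization. I would double-check that the first marginals are \emph{exactly} $A$ and $B$ (so that the contraction step is tight enough to produce the factor $\tfrac12$) and confirm that $\sigma$ indeed fixes $C\otimes C$, since both facts are what make the argument close.
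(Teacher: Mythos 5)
Your proof is correct. For the first inequality you insert the hybrid $C\otimes B$ and apply the triangle inequality before factoring out the shared marginals; this is the same computation the paper performs by adding and subtracting $C(x)B(y)$ inside the integrand, so the two arguments coincide. For the second inequality your route differs from the paper's in how the triangle inequality is deployed. The paper first contracts by marginalizing onto each coordinate, obtaining $TV(A\otimes B, C^{\otimes 2})\geq\max\{TV(A,C),TV(B,C)\}$, and then applies the triangle inequality in the single-variable space, $TV(A,B)\leq TV(A,C)+TV(B,C)\leq 2\max\{TV(A,C),TV(B,C)\}$. You instead apply the triangle inequality in the product space, routing $TV(A\otimes B, B\otimes A)$ through $C^{\otimes 2}$ and using the swap invariance of $C^{\otimes 2}$ to identify the two resulting terms as equal, and only then contract by marginalizing onto the first coordinate to reach $TV(A,B)$. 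Both arguments use exactly the same two ingredients (monotonicity of $TV$ under marginalization and the triangle inequality) and both lose the same factor of $1/2$; the paper's version is marginally more economical since it needs no symmetry observation, while yours makes transparent why the symmetric form $C^{\otimes 2}$ is what allows a single distance to control both marginal discrepancies simultaneously. Your closing cautions are well placed: the first marginals of $A\otimes B$ and $B\otimes A$ are indeed exactly $A$ and $B$, and $C\otimes C$ is indeed fixed by the coordinate swap, so the argument closes.
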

\begin{proof}
Combining the following two properties, we can easily prove the lower bound
\begin{equation}
\begin{aligned}
&TV(A\otimes B, C^{\otimes 2})\geq\max\{TV(A, C),TV( B, C)\},\\
&TV(A,B)\leq TV(A,C)+TV(B,C)\leq 2\max\{TV(A, C),TV(B, C)\}.
\end{aligned}
\end{equation}
For the upper bound
\begin{equation}
\begin{aligned}
&TV(A\otimes B,C\otimes D)=\frac{1}{2}\int dxdy|A(x)B(y)-C(x)D(y)|\\
&\leq \frac{1}{2}\int dx dy[|A(x)-C(x)|B(y)+|B(y)-D(y)|C(x)]=TV(A,C)+TV(B,D).
\end{aligned}
\end{equation}

\end{proof}

With the above lemmas in place, we are now ready to bound the error probability. There are three possible scenarios to consider, depending on whether cheaters are present and which strategies they employ:

Scenarios 1: When no cheaters are present, Alice and Bob will always pass step (1), as they detect all $N$ photons, and will proceed to make their decision based on step (2). In this case, the estimated probability distributions satisfy $\hat{P}_1 = \hat{Q}_1$ and $\hat{P}_2 = \hat{Q}_2$.
\begin{equation}
\begin{aligned}
&TV(Q_1\otimes Q_2, \hat{Q}_1\otimes\hat{Q}_2)\leq  TV(Q_1,\hat{Q}_1)+TV(Q_2,\hat{Q}_2).
\end{aligned}
\end{equation}
According to Lemma \ref{proposition:Q_hat}, choosing $\varepsilon_2=\frac{1}{2}\epsilon$, we have
\begin{equation}\begin{aligned}
&P(TV(Q_1\otimes Q_2, \hat{Q}_1\otimes\hat{Q}_2)>\epsilon)\leq P(TV(Q_1,\hat{Q}_1)+TV(Q_2,\hat{Q}_2)\geq 2\varepsilon_2)\\
&\leq1-(1- P(TV(Q_1,\hat{Q}_1)>\varepsilon_2))(1- P(TV(Q_2,\hat{Q}_2)>\varepsilon_2))\leq 2\exp(-M\xi_2) ,
\end{aligned}
\end{equation}
where $\xi_2=\min\{\xi_2(Q_1),\xi_2(Q_2)\}$, $\xi_2(Q_{1,2})=\frac{\epsilon}{2}-\frac{1}{2M^{2/5}}\sqrt{R}\int dx \sqrt{Q_{1,2}(x)}-\frac{1}{4M^{2/5}}\mu \int dx|Q_{1,2}''(x)|$.
And hence we find the error probability 
\begin{equation}
P(\text{cheaters}|\text{honest})\leq2 \exp(-M\xi_2).
\end{equation}

Scenarios 2: When cheaters are present and apply operations directly to the original states sent by Alice and Bob, the probability of detecting all $N$ photons in a single sample is bounded by $P_{\text{tot}} \leq \max\{|\psi_l|^2, |\psi_r|^2\}$. Therefore, the probability of not detecting the presence of cheaters after $M$ samples is bounded by
\begin{equation}
P_1(\text{honest}|\text{cheaters})\leq \max\{|\psi_l|^{2M},|\psi_r|^{2M}\}.
\end{equation}
Note that even if Alice and Bob fail to detect the cheaters in step (1), they may still succeed in identifying them in step (2). Thus, the actual error probability could be even lower; however, we use the above bound as a conservative estimate.

Scenarios 3: When cheaters are present and they discard the original states from Alice and Bob to prepare their own, the forged state must still pass the test in step (1). The probability of failing to detect the cheaters in this scenario is bounded as follows. Since the cheaters’ prepared state is independent of Alice and Bob’s random choices, the estimated distributions will be close to a fixed distribution $Q_0$. In particular, we expect $\hat{P}_1 = \hat{Q}_0$ and $\hat{P}_2 = \hat{Q}_0'$. In this case
\begin{equation}
\begin{aligned}
&TV(Q_1\otimes Q_2, \hat{Q}_0\otimes\hat{Q}'_0)\geq TV(Q_1\otimes Q_2, {Q}_0^{\otimes 2})-TV({Q}_0^{\otimes 2}, \hat{Q}_0\otimes\hat{Q}_0')\\
&\geq \frac{1}{2}TV(Q_1,Q_2)-TV(Q_0,\hat{Q}_0)-TV(Q_0,\hat{Q}_0').
\end{aligned}
\end{equation}
The total variation distance $TV(Q_1, Q_2)$ is a constant between 0 and 1. For example, in the case where $\psi_l = \frac{\sqrt{3}}{2}$ and $\psi_r = \frac{1}{2}$ (and vice versa for the other state), with $y'' = 0$ and $R, U$ chosen as in Eq.~\ref{eq:R_U_equal}, we find $TV(Q_1, Q_2) = \frac{1}{2}$. According to Lemma \ref{proposition:Q_hat}, as long as $\epsilon < \frac{1}{2} TV(Q_1, Q_2)$, we can set $\varepsilon_1 = \frac{1}{4} TV(Q_1, Q_2) - \frac{1}{2} \epsilon$, so that
\begin{equation}
\begin{aligned}
&P(TV(Q_1\otimes Q_2, \hat{Q}_0\otimes\hat{Q}'_0)<\epsilon)\leq P(TV(Q_0,\hat{Q}_0)+TV(Q_0,\hat{Q}_0')\geq 2\varepsilon_1)\\
&\leq 1-(1-P(TV(Q_0,\hat{Q}_0)>\varepsilon_1))^2\leq 2\exp(-M\xi_1),
\end{aligned}
\end{equation}
where $\xi_1=\frac{1}{2}TV(Q_1,Q_2)-\frac{\epsilon}{2}-\frac{1}{2M^{2/5}}\sqrt{R}\int dx \sqrt{Q_0(x)}-\frac{1}{4M^{2/5}}\mu \int dx|Q_0''(x)|$.
And hence we find that, when the cheater exist, the error probability that Alice and Bob conclude the cheaters do not exist is
\begin{equation}
P_2(\text{honest}|\text{cheaters})\leq 2\exp(-M\xi_1).
\end{equation}

So, the total error probability 
\begin{equation}\begin{aligned}
&P\leq P(\text{cheaters}|\text{honest})+\max\{P_1(\text{honest}|\text{cheaters}),P_2(\text{honest}|\text{cheaters})\}\\
&\leq 2\exp(-M\xi_2)+\max\{|\psi_l|^{2M},|\psi_r|^{2M},2\exp(-M\xi_1)\}.
\end{aligned}
\end{equation}

\end{document}